\definecolor{darkblue}{rgb}{0,0,1}
\newtheorem{theorem}{Theorem}
\newtheorem{lemma}{Lemma}
\newcommand{\R}{{\mathbb{R}}}%
\newcommand{\bu}{{\mathbf{u}}}%
\newcommand{\bv}{{\mathbf{v}}}%
\newcommand{\s}{{\mathbf{s}}}%
\newcommand{\h}{\mathbf{h}}%
\newcommand{\bPhi}{\mathbf{\Phi}}%
\newcommand{\z}{\mathbf{z}}%
\newcommand{\barz}{\bar{\mathbf{z}}}%
\newcommand{\tildez}{\tilde{\mathbf{z}}}%
\newcommand{\La}{\mathcal{L}}%
\newcommand{\N}{\mathcal{N}}%
\newcommand{\bpsi}{\bm{\psi}}%
\newcommand{\bdelta}{\bm{\delta}}%
\newcommand{\M}{\mathbf{M}}
\newcommand{\be}{\mathbf{e}}
\newcommand{\Rhj}{\mathbf{R}_{h_j}}
\newcommand{\beps}{\bm{\epsilon}}
\newcommand{\bC}{\mathbf{C}}
\newcommand{\bH}{\mathbf{H}}
\newcommand{\bL}{\mathbf{L}}
\begin{document}

\title{Decentralized RLS with Data-Adaptive Censoring for Regressions over Large-Scale Networks}
\author{\authorblockN{Zifeng Wang, Zheng Yu, Qing Ling, Dimitris Berberidis, and Georgios B. Giannakis}
\thanks{Zifeng Wang and Zheng Yu are with the Special Class for the Gifted Young, University of Science and Technology of China.
Emails: wzfeng@mail.ustc.edu.cn and yz123@mail.ustc.edu.cn. Qing
Ling (corresponding author) is with the School of Data and
Computer Science, Sun Yat-Sen University. Email:
lingqing556@mail.sysu.edu.cn. Dimitris Berberidis and Georgios B.
Giannakis are with the Dept. of ECE, University of Minnesota.
Emails: bermp001@umn.edu and georgios@umn.edu. This work was
supported by NSF China grant 61573331, NSF Anhui grant
1608085QF130, NSF grants 1442686, 1500713, and 1711471. Part of
this paper has appeared at the 42nd \emph{Intl. Conf. on
Acoustics, Speech, and Signal Processing}, New Orleans, USA, March
5--9, 2017 \cite{CD-RLS-ICASSP}.}}


\maketitle

\begin{abstract}

The deluge of networked data motivates the development of
algorithms for computation- and communication-efficient
information processing. In this context, three data-adaptive
censoring strategies are introduced to considerably reduce the
computation and communication overhead of decentralized recursive
least-squares (D-RLS) solvers. The first relies on alternating
minimization and the stochastic Newton iteration to minimize a
network-wide cost, which discards observations with small
innovations. In the resultant algorithm, each node performs local
data-adaptive censoring to reduce computations, while exchanging
its local estimate with neighbors so as to consent on a
network-wide solution. The communication cost is further reduced
by the second strategy, which prevents a node from transmitting
its local estimate to neighbors when the innovation it induces to
incoming data is minimal. In the third strategy, not only
transmitting, but also receiving estimates from neighbors is
prohibited when data-adaptive censoring is in effect. For all
strategies, a simple criterion is provided for selecting the
threshold of innovation to reach a prescribed average data
reduction. The novel censoring-based (C)D-RLS algorithms are
proved convergent to the optimal argument in the mean-root
deviation sense. Numerical experiments validate the effectiveness
of the proposed algorithms in reducing computation and
communication overhead.

\end{abstract}

\begin{IEEEkeywords}

Decentralized estimation, networks, recursive least-squares (RLS),
data-adaptive censoring

\end{IEEEkeywords}

\section{Introduction} \label{sec:intro}

In our big data era, various networks generate massive amounts of
streaming data. Examples include wireless sensor networks, where a
large number of inexpensive sensors cooperate to monitor, e.g. the
environment \cite{Predd2006, Rabbat2004-ipsn}, or data centers,
where a group of servers collaboratively handles dynamic user
requests \cite{Sharkh2013}. Since a single node has limited
computational resources, decentralized information processing is
preferable as the network size scales up \cite{Cevher2014,
Giannakis2016}. In this paper, we focus on a decentralized linear
regression setup, and develop computation- and
communication-efficient decentralized recursive least-squares
(D-RLS) algorithms.

The main tool we adopt to reduce computation and communication
costs is data-adaptive censoring, which leverages the redundancy
present especially in big data. Upon receiving an observation, nodes determine
whether it is informative or not. Less informative observations
are discarded, while messages among neighboring
nodes are exchanged only when necessary. We propose three censoring-based
(C)D-RLS algorithms that can achieve estimation accuracy
comparable to D-RLS without censoring, while
significantly reducing the computation and communication overhead.

\subsection{Related works}
\label{sec:intro-survey}

The merits of RLS algorithms in solving centralized linear
regression problems are well recognized
\cite{Kushner1997-RLS,Slavakis2014-RLS}. When streaming observations
that depend linearly on a set of unknown parameters become available,
RLS yields the least-squares parameter estimates online.
RLS reduces the computational burden of finding a batch
estimate per iteration, and can even allow for tracking
time-varying parameters. The computational cost can be further
reduced by data-adaptive censoring \cite{Berberidis2016-Censoring},
where less informative data are discarded. On the other hand,
decentralized versions of RLS without censoring have been advocated to
solve linear regression tasks over networks \cite{Mateos2012-DRLS}.
In D-RLS, a node updates its estimate that is common to the
entire network by fusing its
local observations with the local estimates of its neighbors. As
time evolves, all local estimates consent on the centralized RLS
solution. This paper builds on both
\cite{Berberidis2016-Censoring} and \cite{Mateos2012-DRLS} by
developing censoring-based decentralized RLS algorithms, thus
catering to efficient online linear regression over large-scale
networks.

Different from our in-network setting where operation is fully
decentralized and nodes are only able to communicate with their
neighbors, most of the existing distributed censoring algorithms
apply to star topology networks that rely on a fusion center
\cite{Appadwedula2008-Censoring,Jiang2005-censoring,Jiang2005-onoff,Msechu2012-cratio,Rago1996-censoring}.
Their basic idea is that each node transmits data to the fusion
center for further processing only when its local likelihood ratio
exceeds a threshold \cite{Rago1996-censoring}; see also
\cite{Jiang2005-censoring} where communication constraints are
also taken into account. Information fusion over fading channels
is considered in \cite{Jiang2005-onoff}. Practical issues such as
joint dependence of sensor decision rules, randomization of
decision strategies as well as partially known distributions are
reported in \cite{Appadwedula2008-Censoring}, while
\cite{Msechu2012-cratio} also explores quantization jointly with
censoring.

Other than the star topology studied in the aforementioned works,
\cite{Patwari2003-censoring} investigates censoring for a tree
structure. If a node's local likelihood ratio exceeds a threshold,
its local data is sent to its parent node for fusion. A fully
decentralized setting is considered in
\cite{Arroyo2013-censoring}, where each node determines whether to
transmit its local estimate to its neighbors by comparing the
local estimate with the weighted average of its neighbors.
Nevertheless, \cite{Arroyo2013-censoring} aims at mitigating only
the communication cost, while the present work also considers
reduction of the computational cost across the network.
Furthermore, the censoring-based decentralized linear regression
algorithm in \cite{Liu2015-censored-regression} deals with optimal
full-complexity estimation when observations are partially known
or corrupted. This is different from our context, where censoring
is deliberately introduced to reduce computational and
communication costs for decentralized linear regression.

\subsection{Our contributions and organization}
\label{sec:intro-cont}

The present paper introduces three data-adaptive online censoring
strategies for decentralized linear regression. The resultant
CD-RLS algorithms incur low computational and communication costs,
and are thus attractive for large-scale network applications
requiring decentralized solvers of linear regressions. Unlike most
related works that specifically target wireless sensor networks
(WSNs), the proposed algorithms may be used in a broader context
of decentralized linear regression using multiple computing
platforms. Of particular interest are cases where a regression
dataset is not available at a single machine, but it is
distributed over a network of computing agents that are interested
in accurately estimating the regression coefficients in an
efficient manner.

In Section \ref{sec:algo}, we formulate the decentralized online
linear regression problem (Section \ref{sec:algo-problem}), and
recast the D-RLS in \cite{Mateos2012-DRLS} into a new form (Section
\ref{sec:d-rls}) that prompts the development of three censoring
strategies (Section \ref{sec:c-d-rls}). Section
\ref{sec:theoretical} develops the first censoring strategy
(Section \ref{sec:derivation}), analyzes all three censoring
strategies (Section \ref{sec:convergence}), and discusses how to
set the censoring thresholds (Section \ref{sec:threshold}).
Numerical experiments in Section \ref{sec:numerical} demonstrate
the effectiveness of the novel CD-RLS algorithms.

\textit{Notation.} Lower (upper) case boldface letters denote
column vectors (matrices). $(\cdot)^T$, $||\cdot||$, $||\cdot||_2$
and $E[\cdot]$ stand for transpose, 2-norm, induced matrix 2-norm
and expectation, respectively. Symbols $\mathrm{tr}(\mathbf{X})$,
$\lambda_\text{min}(\mathbf{X})$ and
$\lambda_\text{max}(\mathbf{X})$ are used for the trace, minimum
eigenvalue and maximal eigenvalue of matrix $\mathbf{X}$,
respectively. Kronecker product is denoted by $\otimes$ and the
uniform distribution over $[a,b]$ by $\mathcal{U}(a,b)$, and the
Gaussian probability  distribution function (pdf) with mean $\mu$
and variance $\sigma^2$ by $\N(\mu,\sigma^2)$. The standardized
Gaussian pdf is $\phi(t) = (1/\sqrt{2\pi})\text{exp}(-t^2/2)$, and
its the associated complementary cumulative distribution function
is represented by $Q(z):=\int_{z}^{+\infty}\phi(t)dt$.

\section{Context and Algorithms}
\label{sec:algo}

This section outlines the online linear regression setup over
networks, and takes a fresh look at the D-RLS algorithm. Three
strategies are then developed using data-adaptive censoring to
reduce the computational and communication costs of D-RLS.

\subsection{Problem statement} \label{sec:algo-problem}

Consider a bidirectionally connected network with $J$ nodes,
described by a graph $\mathcal{G}:=\{\mathcal{V},\mathcal{E}\}$,
where $\mathcal{V}$ is the set of nodes with cardinality
$|\mathcal{V}|=J$, and $\mathcal{E}$ denotes the set of edges.
Each node $j$ only communicates with its one-hop neighbors,
collected in the set $\N_j \subset \mathcal{V}$. The decentralized
network is deployed to estimate a real vector $\s_0\in\R^{p}$. Per
time slot $t = 1,2,\ldots$, node $j$ receives a real scalar
observation $x_j(t)$ involving the wanted $\s_0$ with a regression
row $\h_j^T(t)$, so that $x_j(t)=\h_j^T(t)\s_0+\epsilon_j(t)$,
with $\epsilon_j(t)\sim \mathcal{N}(0,\sigma_j^2)$.

Our goal is to devise efficient decentralized online algorithms to
solve the following exponentially-weighted least-squares (EWLS)
problem
\begin{align}\label{eq:ewls}
\hat{\s}_{ewls}(t):=\arg\min\limits_{\s}
 \frac{1}{2} ~\sum_{r=1}^{t}\sum_{j=1}^{J}\lambda^{t-r} [x_j(r)-\h_j^T(r)\s]^2
\end{align}
where $\hat{\s}_{ewls}(t)$ is the EWLS estimate at slot
$t$, and $\lambda\in(0,1]$ is a forgetting factor that de-emphasizes the importance of past measurements, and thus enables tracking of a
non-stationary process. When $\lambda=1$, \eqref{eq:ewls} boils
down to a standard decentralized online least-squares estimate.

\subsection{D-RLS revisited} \label{sec:d-rls}

The D-RLS algorithm of \cite{Mateos2012-DRLS} solves
\eqref{eq:ewls} as follows. Per time slot $t$, node $j$ receives
$x_j(t)$ and $\h_j^T(t)$ and uses them to update the per-node inverse $p\times p$ covariance matrix as
\begin{align}
\bPhi_j^{-1}(t) &= \lambda^{-1}\bPhi_j^{-1}(t-1) \nonumber\\
&
-\frac{\lambda^{-1}\bPhi_j^{-1}(t-1)\h_j(t)\h_j^T(t)\bPhi_j^{-1}(t-1)}{\lambda+\h_j^T(t)\bPhi_j^{-1}(t-1)\h_j(t)}\label{eq:d-rls-phi}
\end{align}
along with the per-node $p\times 1$ cross-covariance vector as
\begin{align}
\bm{\psi}_j(t) &= \lambda\bm{\psi}_j(t-1)+\h_j(t)x_j(t).
\label{eq:d-rls-psi}
\end{align}
Using $\bPhi_j^{-1}(t)$ and $\bm{\psi}_j(t)$, node $j$ then updates its
local parameter estimate using
\begin{align}
\s_j(t) = \bPhi_j^{-1}(t) \Big[ \bm{\psi}_j(t)
-\frac{1}{2}\sum_{j'\in\N_j} \left(
\bv_j^{j'}(t-1)-\bv_{j'}^j(t-1) \right) \Big] \label{eq:d-rls-s}
\end{align}
where $\bv_j^{j'}(t-1)$ denotes the Lagrange multiplier of node
$j$ corresponding to its neighbor $j'$ at slot $t-1$, that
captures the accumulated differences of neighboring estimates,
recursively obtained as ($\rho >0$ is a step-size)
\begin{align}
\bv_j^{j'}(t-1) =
\bv_j^{j'}(t-2)+\rho\big[\s_j(t-1)-\s_{j'}(t-1)\big].
\label{eq:d-rls-v}
\end{align}

Next, we develop an equivalent novel form of D-RLS recursions
\eqref{eq:d-rls-phi}--\eqref{eq:d-rls-v} that is convenient for
our incorporation of data-adaptive censoring. Detailed derivation
of the equivalence can be found in Appendix \ref{sec:eq-DRLS}. The
inverse covariance matrix is updated as in \eqref{eq:d-rls-phi}. However, the
update of $\s_j(t)$ in \eqref{eq:d-rls-s} is replaced by
\begin{align}
\s_j(t)& =  \s_j(t-1)+\bPhi_j^{-1}(t)\h_j(t)\big[x_j(t)-\h_j^T(t)\s_j(t-1)\big]\nonumber\\
& -\rho\bPhi_j^{-1}(t)\bdelta_j(t-1)\label{eq:new-d-rls-s}
\end{align}
where $\bdelta_j(t)$ stands for a Lagrange multiplier
conveying network-wide information that is updated as
\begin{align}
\bdelta_j(t) & =  \bdelta_j(t-1) + \sum_{j'\in\N_j}[\s_j(t)-\s_{j'}(t)] \nonumber \\
& - \lambda
\sum_{j'\in\N_j}[\s_j(t-1)-\s_{j'}(t-1)].\label{eq:new-d-rls-delta}
\end{align}
Observe that $\bdelta_j(t)$ stores the weighted sum of differences
between the local estimate of node $j$, and all estimates of its
neighbors. Interestingly, if the network is disconnected and the
nodes are isolated, then $\bdelta_j(t) = \mathbf{0}$ so long as
$\bdelta_j(0) = \mathbf{0}$, and the update of $\s_j(t)$ in
\eqref{eq:new-d-rls-s} basically boils down to the centralized RLS
one \cite{Kushner1997-RLS,Slavakis2014-RLS}. That is, the current
estimate is modified from its previous value using the prediction
error $x_j(t)-\h_j^T(t)\s_j(t-1)$, which is known as the incoming
data \textit{innovation}. If on the other hand the network is
connected, nodes can leverage estimates of their neighbors
(captured by $\bdelta_j(t)$), which provide new information from
the network other than its own observations $\{x_j(t) \}$. The
term $\rho\bPhi_j^{-1}(t)\bdelta_j(t-1)$ can be viewed as a
Laplacian smoothing regularizer, which encourages all nodes of the
graph to reach consensus on their estimates.

\noindent {\bf Remark 1}. In D-RLS, \eqref{eq:d-rls-phi} incurs
computational complexity $O(p^2)$, since calculating the products
$\bPhi_j^{-1}(t-1)\h_j(t)$ and $\bPhi_j^{-1}(t-1)\bpsi_j(t)$
requires $O(p^2)$ multiplications. Similarly,
\eqref{eq:new-d-rls-s} incurs \emph{computational} complexity
$O(p^2)$, that is dominated by the matrix-vector multiplications
$\bPhi_j^{-1}(t)\h_j(t)$ and $\bPhi_j^{-1}(t)\bdelta_j(t-1)$. The
cost of carrying out \eqref{eq:new-d-rls-delta} is relatively
minor. Regarding \emph{communication} cost per slot $t$, node $j$
needs to transmit its local estimate $\s_j(t)$ to its neighbors
and receive estimates $\s_{j'}(t)$ from all neighbors $j' \in
\mathcal{N}_j$. The computational burden of D-RLS recursions
\eqref{eq:d-rls-phi}--\eqref{eq:d-rls-v} is comparable to that of
\eqref{eq:d-rls-phi}, \eqref{eq:new-d-rls-s} and
\eqref{eq:new-d-rls-delta}, with the cost of \eqref{eq:d-rls-s}
being the same as what \eqref{eq:new-d-rls-s} requires. Meanwhile,
the original form requires neighboring nodes $j$ and $j'$ to
exchange $\bv_j(t)$ and $\bv_{j'}(t)$ in addition to $\s_j(t)$ and
$\s_{j'}(t)$, which doubles the communication cost relative to
\eqref{eq:new-d-rls-s} and \eqref{eq:new-d-rls-delta}.

\subsection{Censoring-based D-RLS strategies} \label{sec:c-d-rls}

The D-RLS algorithm has well documented merits for decentralized
online linear regression \cite{Mateos2012-DRLS}. However, its
computational and communication costs per iteration are fixed,
regardless of whether observations and/or the estimates from
neighboring nodes are informative or not. This fact motivates our
idea of permeating benefits of data-adaptive censoring to
\emph{decentralized} RLS, through three novel censoring-based
(C)D-RLS strategies. They are different from the RLS algorithms in
\cite{Berberidis2016-Censoring}, where the focus is on
\emph{centralized} online linear regression.

Our first censoring strategy (CD-RLS-1) can be intuitively
motivated as follows. If a given datum $(x_j(t),\h_j(t))$ is not
informative enough, we do not have to use it since its
contribution to the local estimate of node $j$, as well as to
those of all network nodes, is limited. With $\{\tau\sigma_j(t)\}$
specifying proper thresholds to be discussed later, this intuition
can be realized using a censoring indicator variable
\begin{align}\label{eq:censor-indicator}
c_j(t) &:=
\begin{cases}
0,& \text{ if $|x_j(t)-\h_j^T(t)\s_j(t-1)| \leq \tau\sigma_j(t)$}\\
1,& \text{ if $|x_j(t)-\h_j^T(t)\s_j(t-1)| > \tau\sigma_j(t)$}.
\end{cases}
\end{align}
If the absolute value of the innovation is less than
$\tau\sigma_j(t)$, then $(x_j(t),\h_j(t))$ is censored;
otherwise $(x_j(t),\h_j(t))$ is used. Section \ref{sec:threshold}
will provide rules for selecting the threshold $\tau$ along with
the local noise variance $\sigma_j^2(t)$, whose
computations are lightweight. If data censoring is in effect, we
simply throw away the current datum by letting $\h_j(t) =
\mathbf{0}$ in \eqref{eq:d-rls-phi}, to obtain
\begin{align}
\bPhi_j^{-1}(t) &=
\lambda^{-1}\bPhi_j^{-1}(t-1).\label{eq:c-d-rls-phi}
\end{align}
Likewise, letting $x_j(t) = 0$ and $\h_j(t) = \mathbf{0}$ in
\eqref{eq:new-d-rls-s}, yields
\begin{align}
\s_j(t) =
\s_j(t-1)-\rho\bPhi_j^{-1}(t)\bdelta_j(t-1).\label{eq:c-d-rls-s}
\end{align}

CD-RLS-1 is summarized in Algorithm 1.
If censoring is in effect, computation cost per node and per
slot is a fraction $2/7$ of the D-RLS in \eqref{eq:d-rls-s}
and \eqref{eq:new-d-rls-delta} without censoring. To recognize
why, observe that the
scalar-matrix multiplication  $\lambda^{-1}\bPhi_j^{-1}(t-1)$
in \eqref{eq:c-d-rls-phi} is not necessary as the update of
$\bPhi_j^{-1}(t)$ can be merged to wherever it is needed, e.g.,
in \eqref{eq:c-d-rls-s} and the next slot. In addition,
carrying out the $O(p^2)$ multiplications to obtain
$\bPhi_j^{-1}(t)\h_j(t)$ is no longer necessary, while the $O(p^2)$
multiplications required to obtain $\bPhi_j^{-1}(t)\bdelta_j(t-1)$
remain the same.

The first censoring strategy still requires nodes to communicate
with neighbors per time slot; hence, the communication cost
remains the same. Reducing this communication cost, motivates our
second censoring strategy (CD-RLS-2), where each node does not
perform extra computations relative to CD-RLS-1, but only receives
neighboring estimates if its current datum is censored. The
intuition behind this strategy is that if a datum is censored,
then very likely the current local estimate is sufficiently
accurate, and the node does not need to account for estimates from
its neighbors. Estimates from neighbors, are only stored for
future usage. Likewise, neighbors in $\mathcal{N}_j$ do not need
node $j$'s current estimate either, because they have already
received a very similar estimate. CD-RLS-2 is summarized in
Algorithm 2.

The third censoring strategy (CD-RLS-3) given by Algorithm 3 is
more aggressive than the second one. If a node has its datum
censored at a certain slot, then it neither transmits to nor
receives from its neighbors, and in that sense it remains
``isolated'' from the rest of the network in this slot.
Apparently, we should not allow any node to be forever
isolated. To this end, we can force each node to receive
the local estimate from any of its neighbors at least once
every $d_{\max}$ slots, which upper bounds the delay of
information exchange to $d_{\max}$. Interestingly, the
ensuing section will prove convergence of all three strategies
to the optimal argument in the mean-square deviation sense
under mild conditions.

\begin{algorithm}[t]
{\small
\caption{CD-RLS-1}
\begin{algorithmic}[1]
\STATE Initialize $\bdelta_j(0)$, $\{\s_j(0)\}_{j=1}^{J}$ and
$\{\bPhi^{-1}_j(0)\}_{j=1}^{J}$ \FOR{$t=1,2,\ldots$}
   \STATE All $j\in\mathcal{V}$:
   \IF{$|x_j(t)-\h_j^T(t)\s_j(t-1)|\leq \tau\sigma_j(t)$}
   \STATE update $\bPhi^{-1}_j(t)$ using \eqref{eq:c-d-rls-phi}
   \STATE update $\s_j(t)$ using \eqref{eq:c-d-rls-s}
   \ELSE
   \STATE update $\bPhi^{-1}_j(t)$ using \eqref{eq:d-rls-phi}
   \STATE update $\s_j(t)$ using \eqref{eq:new-d-rls-s}
   \ENDIF
   \STATE transmit $\s_j(t)$ to and receive $\s_{j'}(t)$ from all $j' \in \N_j$
   \STATE compute $\bdelta_j(t)$ using \eqref{eq:new-d-rls-delta}
\ENDFOR
\end{algorithmic}}
\end{algorithm}

\begin{algorithm}[t]
{\small
\caption{CD-RLS-2}
\begin{algorithmic}[1]
\STATE Initialize $\bdelta_j(0)$, $\{\s_j(0)\}_{j=1}^{J}$ and
$\{\bPhi^{-1}_j(0)\}_{j=1}^{J}$ \FOR{$t=1,2,\ldots$}
   \STATE All $j\in\mathcal{V}$:
   \IF{$|x_j(t)-\h_j^T(t)\s_j(t-1)|\leq \tau\sigma_j(t)$}
   \STATE receives $\s_{j'}(t)$ from all $j' \in \N_j$
   \ELSE
   \STATE set $\s_{j'}(t-1)$ as recently received
   ones from all $j' \in \N_j$
   \STATE update $\bPhi^{-1}_j(t)$ using \eqref{eq:d-rls-phi}
   \STATE update $\s_j(t)$ using \eqref{eq:new-d-rls-s}
   \STATE transmit $\s_j(t)$ to and receive $\s_{j'}(t)$ from all $j' \in \N_j$
   \STATE compute $\bdelta_j(t)$ using \eqref{eq:new-d-rls-delta}
   \ENDIF
\ENDFOR
\end{algorithmic}}
\end{algorithm}

\begin{algorithm}[t]
{\small
\caption{CD-RLS-3}
\begin{algorithmic}[1]
\STATE Initialize $\bdelta_j(0)$, $\{\s_j(0)\}_{j=1}^{J}$ and
$\{\bPhi^{-1}_j(0)\}_{j=1}^{J}$ \FOR{$t=1,2,\ldots$}
   \STATE All $j\in\mathcal{V}$:
   \IF{$|x_j(t)-\h_j^T(t)\s_j(t-1)|\leq \tau\sigma_j(t)$}
   \STATE stay idle
   \ELSE
   \STATE set $\s_{j'}(t-1)$ as recently received
   ones from all $j' \in \N_j$
   \STATE update $\bPhi^{-1}_j(t)$ using \eqref{eq:d-rls-phi}
   \STATE update $\s_j(t)$ using \eqref{eq:new-d-rls-s}
   \STATE transmit $\s_j(t)$ to and receive $\s_{j'}(t)$ from all $j' \in \N_j$
   \STATE compute $\bdelta_j(t)$ using \eqref{eq:new-d-rls-delta}
   \ENDIF
   \IF{do not receive from any $j' \in \N_j$ for $d_{\max}$ time}
   \STATE receive $\s_{j'}(t)$
   \ENDIF
\ENDFOR
\end{algorithmic}}
\end{algorithm}

\section{Development and performance analysis}\label{sec:theoretical}

This section starts with a criterion-based development of CD-RLS-1. Convergence analysis of all three
censoring strategies will follow, before developing practical means of setting the censoring threshold $\tau \sigma_j(t)$.

\subsection{Derivation of censoring-based D-RLS-1}
\label{sec:derivation}

Consider the following truncated
quadratic cost that is similar to the one used in the
censoring-based but centralized RLS
\cite{Berberidis2016-Censoring}
\begin{align}\label{eq:trun-quad}
& f_{j,t}(\s):= \\
& \hspace{-0.5em} \begin{cases}
0,&|x_j(t)-\h_j^T(t)\s| \leq \tau\sigma_j(t) \\
\frac{1}{2}[x_j(t)-\h_j^T(t)\s]^2-\frac{1}{2}\tau^2\sigma_j(t)^2,&|x_j(t)-\h_j^T(t)\s|
> \tau\sigma_j(t) \nonumber
\end{cases}
\end{align}
which is convex, but non-differentiable on
$\{\s:|x_j(t)-\h_j^T(t)\s| = \tau\sigma_j(t)\}$. Using
\eqref{eq:trun-quad} to replace the quadratic loss
$[x_j(\tau)-\h_j^T(\tau)\s]^2$ in \eqref{eq:ewls}, our CD-RLS-1
criterion is
\begin{align}\label{eq:ewls-new}
\min\limits_{\s}\sum_{r=1}^{t}\sum_{j=1}^{J}\lambda^{t-r}f_{j,r}(\s).
\end{align}

To solve \eqref{eq:ewls-new} in a decentralized manner, we
introduce a local estimate $\s_j$ per node $j$, along with
auxiliary vectors $\barz_j^{j'}$ and $\tildez_j^{j'}$ per edge
$(j,j')$. By constraining all local estimates of neighbors
to consent, we arrive at the following equivalent separable
convex program per slot $t$
\begin{align}\label{eq:problem}
\min\limits_{\{{\s}_j\}_{j \in \mathcal{V}}} \quad & \sum_{r=1}^t\sum_{j=1}^{J} \lambda^{t-r}f_{j,r}(\s_j)\\
s.t. \quad & \s_j= \barz_j^{j'},\s_{j'}=\tildez_j^{j'},
\barz_j^{j'}=\tildez_j^{j'},j\in \mathcal{V},j'\in \N_j. \nonumber
\end{align}

Next, we employ alternating minimization and the
stochastic Newton iteration to derive our first censoring-based
solver of \eqref{eq:problem}. To this end, consider the
Lagrangian of \eqref{eq:problem} that is given by
\begin{align}\label{eq:lagrangian}
&\La(\s,\z,\bv,\bu) = \sum_{j \in \mathcal{V}}\sum_{r=1}^{t}\lambda^{t-r}f_{j,r}(\s_j) \nonumber \\
&+\sum_{j=1}^{J}\sum_{j'\in\N_j}\big[(\bv_j^{j'})^T(\s_j-\barz_j^{j'})+(\bu_j^{j'})^T(\s_{j'}-\tildez_j^{j'})\big]
\end{align}
where $\s:=\{\s_j\}_{j \in \mathcal{V}}$ and
$\z:=\{\barz_{j}^{j'},\tildez_{j}^{j'}\}_{j\in\mathcal{V}}^{j'\in\N_j}$
are primal variables, while $\bv:=\{\bv_j^{j'} \in
\mathbb{R}^p\}_{j\in\mathcal{V}}^{j'\in\N_j}$ and
$\bu:=\{\bu_j^{j'} \in
\mathbb{R}^p\}_{j\in\mathcal{V}}^{j'\in\N_j}$ are dual variables.
Consider also the augmented Lagrangian of \eqref{eq:problem}, namely
\begin{align}\label{eq:aug-lagrangrian}
&\La_\rho(\s,\z,\bv,\bu) = \La(\s,\z,\bu,\bv) \nonumber\\
& \hspace{2em}
+\frac{\rho}{2}\sum_{j=1}^{J}\sum_{j'\in\N_j}\big[||\s_j-\barz_j^{j'}||^2+||\s_{j'}-\tildez_j^{j'}||^2\big]
\end{align}
where $\rho$ is a positive regularization scale. Note that the
constraints on $\z$ are not dualized, but they are collected in the set
$\mathcal{C}_{\z} :=
\{\z|\barz_j^{j'}=\tildez_{j}^{j'},j\in\mathcal{V},j'\in\N_j,j\neq
j'\}$.

To minimize \eqref{eq:problem} per slot $t>0$, we rely on
alternating minimization \cite{Tseng1991} in an online
manner, which entails an iterative procedure consisting of three
steps.

\vspace{1em}

\textbf{[S1] \quad Local estimate updates:}
\begin{align*}
\s(t) =
\text{arg}\min\limits_{\s}\La(\s,\z(t-1),\bv(t-1),\bu(t-1))
\end{align*}

\textbf{[S2] \quad Auxiliary variable updates:}
\begin{align*}
\z(t) = \text{arg}\min\limits_{\z\in
\mathcal{C}_{\z}}\La_\rho(\s(t),\z,\bv(t-1),\bu(t-1))
\end{align*}

\textbf{[S3] \quad Multiplier updates:}
\begin{align*}
\bv_j^{j'}(t) &= \bv_j^{j'}(t-1)+\rho \big[\s_j(t)-\barz_j^{j'}(t) \big]\\
\bu_j^{j'}(t) &= \bu_j^{j'}(t-1)+\rho \big
[\s_{j'}(t)-\tildez_j^{j'}(t) \big].
\end{align*}

Observe that [S2] is a linearly constrained quadratic program,
for which if $\bv_j^{j'}(t-1) + \bu_j^{j'}(t-1) = \mathbf{0}$, we
always have
$$\s_{j'}(t) + \s_j(t)  = \tildez_j^{j'}(t) +\barz_j^{j'}(t) \quad \text{and} \quad \tildez_j^{j'}(t) = \barz_j^{j'}(t).$$
Therefore, the initial values of $\bv_j^{j'}$
and $\bu_j^{j'}$ in [S3] are selected to satisfy $\bv_j^{j'}(0) +
\bu_j^{j'}(0) = \mathbf{0}$ (the simplest choice is $\bv_j^{j'}(0)
= \bu_j^{j'}(0) = \mathbf{0}$). It then holds for $t \geq 0$ that
$$\bv_j^{j'}(t) + \bu_j^{j'}(t) = \mathbf{0}.$$
Using the latter to eliminate $\bu_j^{j'}$ in [S3], we
obtain
\begin{align}\label{eq:v-update}
\bv_j^{j'}(t) & = \bv_j^{j'}(t-1) + \frac{\rho}{2} \big[\s_j(t)-\barz_j^{j'}(t) - \s_{j'}(t)+\tildez_j^{j'}(t)\big] \nonumber \\
              & = \bv_j^{j'}(t-1) + \frac{\rho}{2} \big[\s_j(t)-\s_{j'}(t)\big]
\end{align}
where the first equality comes from subtracting the two lines in
[S3], and the second equality is due to $\tildez_j^{j'}(t) =
\barz_j^{j'}(t)$. The auxiliary variables
$\tildez_j^{j'}$ and $\barz_j^{j'}$ can be also eliminated. When
$\bv_j^{j'}$ is initialized by $\bv_j^{j'}(0) = \mathbf{0}$,
summing up both sides of \eqref{eq:v-update} from $r=1$ to $r=t$,
we arrive, after telescopic cancellation, at
\begin{align}\label{eq:v-update-new}
\bv_j^{j'}(t) = \frac{\rho}{2} \sum_{r=1}^t
\big[\s_j(r)-\s_{j'}(r)\big].
\end{align}

Moving on to [S1], observe that it can be split into $J$ per-node subproblems
\begin{align*}
\s_j(t) & = \text{arg}\min\limits_{\s_j}\sum_{r=1}^{t}\lambda^{t-r}f_{j,r}(\s_j) \\
&\hspace{4.2em}
+\sum_{j'\in\N_j}[\bv_j^{j'}(t-1)-\bv_{j'}^{j}(t-1)]^T\s_j.
\end{align*}
Before solving \eqref{eq:trun-quad} with the stochastic Newton
iteration \cite{Amari1998}, eliminate $\bv_j^{j'}$ using
\eqref{eq:v-update-new} to obtain
\begin{align*}
\s_j(t) & = \text{arg}\min\limits_{\s_j}\sum_{r=1}^{t}\lambda^{t-r}f_{j,r}(\s_j) \\
&\hspace{4.2em} + \rho \sum_{r=1}^{t-1} \sum_{j'\in\N_j}
\big[\s_j(r)-\s_{j'}(r)\big]^T\s_j
\end{align*}
which after manipulating the double sum yields
\begin{align*}
& \s_j(t) = \text{arg}\min\limits_{\s_j}\sum_{r=1}^{t}\lambda^{t-r}f_{j,r}(\s_j) \\
& + \sum_{r=1}^{t} \lambda^{t-r} \rho \sum_{j'\in\N_j} \Big[\s_j(r-1)-\s_{j'}(r-1) \\
& \hspace{3em}+ (1-\lambda) \sum_{\xi=1}^{r-1} \big(
\s_j(\xi-1)-\s_{j'}(\xi-1) \big) \Big]^T\s_j.
\end{align*}
If the update in \eqref{eq:new-d-rls-delta} is initialized with
$\bdelta_j(0) = \mathbf{0}$, summing up both sides from $\xi=1$
to $\xi=r-1$, we find after telescopic cancellation
\begin{align}\label{eq:delta-blabla}
& \hspace{-0.3em}\bdelta_j(r-1) = \sum_{j'\in\N_j} \Big[\s_j(r-1)-\s_{j'}(r-1) \nonumber \\
& \hspace{5em} + (1-\lambda) \sum_{\xi=1}^{r-1} \big(
\s_j(\xi-1)-\s_{j'}(\xi-1) \big) \Big].
\end{align}

Thus, optimization of $\s_j(t)$ reduces to
\begin{align}\label{eq:newton-problem}
& \s_j(t) =
\text{arg}\min\limits_{\s_j}\sum_{r=1}^{t}\lambda^{t-r} g_{j,r}(\s_j)
\end{align}
where the instantaneous cost per slot $t$ is
\begin{align}\label{eq:newton-cost}
& g_{j,t}(\s_j): =f_{j,t}(\s_j)+ \rho \bdelta_j^T(t-1) \s_j.
\end{align}

The stochastic gradient of the latter is given by
\begin{align*}
& \nabla g_{j,t}(\s_j(t-1)) \\
= & - c_j(t) \Big[ \big( x_j(t) - \h_j(t) \s_j(t-1) \big) \h_j(t)
\Big] + \rho \bdelta_j(t-1).
\end{align*}
In the stochastic Newton method, the Hessian matrix is given
by
$$\M_j(t)=E[\nabla^2g_{j,t}(\s_j(t-1))]=E[c_j(t)\h_j(t)\h_j^T(t)]$$
where the second equality comes from
\eqref{eq:trun-quad} and \eqref{eq:censor-indicator}.
A reasonable approximation of the expectation is provided by
sample averaging. However, presence of $\lambda\neq 1$
affects attenuation of regressors, which leads to
\begin{align*}
\M_j(t) &=\frac{1}{t}\sum_{r=1}^{t}\lambda^{t-r}c_j(r)\h_j(r)\h_j^T(r)\\
&=
\lambda\frac{t-1}{t}\M_j(t-1)+\frac{1}{t}c_j(t)\h_j(t)\h_j^T(t).
\vspace{-0.5em}
\end{align*}
Applying the matrix inversion lemma, we obtain
\begin{align}\label{eq:inverse-M}
\M_j^{-1}(t) & = \frac{t}{t-1}\Big[\lambda^{-1}\M_j^{-1}(t-1) \\
&
-c_j(t)\frac{\lambda^{-1}\M_j^{-1}(t-1)\h_j(t)\h_j^T(t)\M_j^{-1}(t-1)}{(t-1)\lambda-\h^T(t)\M_j^{-1}(t-1)\h_j(t)}\Big]
\nonumber
\end{align}
and after adopting a diminishing step size $1/t$, the
stochastic Newton update becomes
\begin{align*}
\s_j(t) &= \s_j(t-1) - \frac{1}{t} \M_j^{-1}(t) \nabla
g_{j,t}(\s_j(t-1)).
\end{align*}
For rational convenience, let $\bPhi_j^{-1}(t) := \M_j^{-1}(t)/t$, and
rewrite \eqref{eq:inverse-M} as (cf. \eqref{eq:d-rls-phi})
\begin{align}\label{eq:phi-decreasing}
\bPhi_j^{-1}(t) & =  \lambda^{-1}\bPhi_j^{-1}(t-1) \\
&
-c_j(t)\frac{\lambda^{-1}\bPhi_j^{-1}(t-1)\h_j(t)\h_j^T(t)\bPhi_j^{-1}(t-1)}{\lambda+\h_j^T(t)\bPhi_j^{-1}(t-1)\h_j(t)}.
\nonumber
\end{align}
Substituting $\nabla g_{j,t}(\s_j(t-1))$ and $\bPhi_j^{-1}(t)$
into the stochastic Netwon iteration yields (cf.
\eqref{eq:new-d-rls-s})
\begin{align*}
\hspace{-0.5em} \s_j(t) & = \s_j(t-1)+c_j(t)\bPhi_j^{-1}(t)\h_j(t)\big[x_j(t)-\h_j^T(t)\s_j(t-1)\big]\\
& -\rho\bPhi_j^{-1}(t)\bdelta_j(t-1)
\end{align*}
which completes the development of CD-RLS-1.

\subsection{Convergence analysis}\label{sec:convergence}

Here we establish convergence of all three novel strategies
for $\lambda = 1$. With $\lambda < 1$, the EWLS estimator
can even adapt to time-varying parameter vectors, but analyzing
its tracking performance goes beyond the scope of this paper.
For the time-invariant case ($\lambda=1$), we will rely on
the following assumption.
%

%

\noindent (\textbf{as1}) \emph{Observations obey the linear model
$x_j(t)=\h_j(t)\s_0+\epsilon_j(t)$, where
$\epsilon_j(t)\sim\N(0,\sigma_j^2)$ is correlated across $j$ and
$t$. Rows $\h_j^T(t)$ are uniformly bounded and independent of
$\epsilon_j(t)$. Covariance matrices $\Rhj := E[\h_j(t)\h_j^T(t)]
\succ\mathbf{0}_{p\times p}$ are time-invariant and positive
definite. Process $\{c_j(t)\h_j(t)\h_j^T(t)\}$ is mean ergodic,
while $\{\epsilon_j(t)\}$ and $\{c_j(t)\}$ are uncorrelated.
Eigenvalues of $\bPhi_j(t)/t$, which approximate the true
positive definite Hessian matrices $E[c_j(t)\h_j(t)\h_j^T(t)]$,
are bounded below by a positive constant when $t$ is large
enough.}

We will assess convergence of our iterative algorithms using
the squared mean-root deviation (SMRD) metric, defined as
\begin{align}\label{SMRD}
\text{SMRD}(t) &:= \left\{ E \Big[ \Big(\sum_{j=1}^{J}
||\s_j(t)-\s_0||^2 \Big)^{\frac{1}{2}} \Big] \right\}^2.
\end{align}
Letting $\be_j(t) := \s_j(t)-\s_0 \in \mathbb{R}^p$ denote the
estimation error of node $j$ and $\be(t):=[\be_1^T(t), \ldots,
\be_J^T(t)]^T \in \mathbb{R}^{Jp}$ the estimation error across all
nodes, one can see that $\text{SMRD}(t) =
\left\{E[\|\be(t)\|]\right\}^2$. Observe that $\text{SMRD}(t)$ is
a lower-bound approximation of the mean-square deviation (MSD)
metric $\text{MSD}(t) := E[\|\be(t)\|^2]$
\cite{Global-performance,MSD2}, since by Jensen's inequality
$\left\{E[\|\be(t)\|]\right\}^2 \leq E[\|\be(t)\|^2]$.


Under (as1), convergence of CD-RLS-1 and CD-RLS-2 is asserted as
follows; see Appendix \ref{sec:proof-1} for the proof.

\begin{theorem}\label{thm:convergence} For CD-RLS-1 and
CD-RLS-2 Algorithms 1 and 2, set $\sigma_j(t)=\sigma_j$ and
$\bPhi^{-1}_j(0) = \gamma \mathbf{I}_p$ per node $j$. Let $\mu :=
\min\{\lambda_{\min}(\Rhj), j \in \mathcal{V}\}$, and suppose $0 <
\rho < 1 / (\gamma\lambda_{\max}(\bL))$ for CD-RLS-1 and
correspondingly $0 < \rho < \rho_0 $ for CD-RLS-2, while $\bL$ is
the network Laplacian and the constant $\rho_0$ depends on
$\lambda_{\max}(\bL), \gamma, \tau, \mu$, and the upper bound of
$\h_j(t)$. Under (as1), there exists $t_0
> 0$ for which it holds for $t > t_0$ that
\begin{align}
     & \left\{ E \Big[ \Big(\sum_{j=1}^{J}
||\s_j(t)-\s_0||^2 \Big)^{\frac{1}{2}} \Big] \right\}^2 \nonumber \\
\leq & \sum_{j=1}^{J}\frac{\gamma^{-1}||\s_j(0)-\s_0||^2 +
\gamma t_0\sigma_j^2\mathrm{tr}(\Rhj)}{2Q(\tau)\mu t}\nonumber\\
+&  \frac{\gamma
\sigma_j^2\lambda_{\max}(\Rhj^{-1})\mathrm{tr}(\Rhj)\ln(t)}{4Q^2(\tau)\mu
t}. \label{eq:main}
\end{align}
\end{theorem}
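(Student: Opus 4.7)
The plan is to exploit a telescoping identity that makes $\bPhi_j(t)\be_j(t)$ evolve as a simple random-walk sum, bound $E[\|\be(t)\|^2]$ using $\|\bPhi_j^{-1}(t)\|$ together with martingale and consensus estimates, and invoke Jensen's inequality $\{E[\|\be(t)\|]\}^2 \leq E[\|\be(t)\|^2]$ at the end. First, I would merge the two branches of Algorithm~1 into the unified update, subtract $\s_0$, substitute $x_j(t)=\h_j^T(t)\s_0+\epsilon_j(t)$ from (as1), and use that at $\lambda=1$ with $\bdelta_j(0)=\mathbf{0}$ equation \eqref{eq:delta-blabla} collapses to $\bdelta_j(t-1)=\sum_{j'\in\N_j}[\s_j(t-1)-\s_{j'}(t-1)]$, i.e., the $j$-th block of $(\bL\otimes\mathbf{I}_p)\be(t-1)$ since $(\bL\otimes\mathbf{I}_p)(\mathbf{1}_J\otimes\s_0)=\mathbf{0}$. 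By the matrix inversion lemma applied to \eqref{eq:phi-decreasing}, one recovers $\bPhi_j(t)=\bPhi_j(t-1)+c_j(t)\h_j(t)\h_j^T(t)$; multiplying the resulting error recursion by $\bPhi_j(t)$ on the left and substituting this accumulator identity produces the key telescope
\[
\bPhi_j(t)\be_j(t)=\bPhi_j(t-1)\be_j(t-1)+c_j(t)\h_j(t)\epsilon_j(t)-\rho\bdelta_j(t-1),
\]
and hence the closed form $\bPhi_j(t)\be_j(t)=\gamma^{-1}\be_j(0)+M_j(t)-\rho D_j(t)$, with martingale part $M_j(t):=\sum_{r=1}^{t}c_j(r)\h_j(r)\epsilon_j(r)$ and consensus part $D_j(t):=\sum_{r=1}^{t}\bdelta_j(r-1)$.

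From this representation, $\|\be_j(t)\|\leq\|\bPhi_j^{-1}(t)\|\cdot(\gamma^{-1}\|\be_j(0)\|+\|M_j(t)\|+\rho\|D_j(t)\|)$. Mean ergodicity in (as1) forces $\bPhi_j(t)/t\to E[c_j(t)\h_j(t)\h_j^T(t)]=2Q(\tau)\Rhj$ once $\be_j$ has stabilized, the factor $2Q(\tau)$ arising because $\Pr\{c_j(t)=1\}\to 2Q(\tau)$ by Gaussianity of $\epsilon_j$; consequently there exists $t_0$ with $\lambda_{\min}(\bPhi_j(t))\geq 2Q(\tau)\mu t$ for all $t>t_0$, yielding $\|\bPhi_j^{-1}(t)\|\leq 1/(2Q(\tau)\mu t)$. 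The martingale second moment satisfies $E[\|M_j(t)\|^2]=\sum_{r=1}^{t}E[c_j(r)\|\h_j(r)\|^2\epsilon_j^2(r)]$ by the uncorrelatedness of $\epsilon_j(t)$ with $(c_j(t),\h_j(t))$, which yields the $\sigma_j^2\mathrm{tr}(\Rhj)$ scaling of \eqref{eq:main}; the step-size restriction $\rho<1/(\gamma\lambda_{\max}(\bL))$ controls the magnitude of $D_j(t)$ by keeping the consensus-regularization dynamics stable.

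The main technical obstacles are, first, quantifying the finite-time deviation of $\bPhi_j(t)/t$ from $2Q(\tau)\Rhj$: since $c_j(t)$ depends on the recent history of $\be_j$, the ergodic eigenvalue bound holds only past a burn-in slot $t_0$, which is exactly why the transient contribution $\gamma t_0\sigma_j^2\mathrm{tr}(\Rhj)/t$ appears in \eqref{eq:main}; and, second, combining the $O(\sqrt{t})$ bound on $\|M_j(t)\|$ (which produces the $1/t$ decay after dividing by $2Q(\tau)\mu t$) with the $\|D_j(t)\|$ bound, whose logarithmic accumulation of per-step Laplacian residuals is responsible for the $\ln(t)/t$ factor carrying the constant $\gamma\sigma_j^2\lambda_{\max}(\Rhj^{-1})\mathrm{tr}(\Rhj)/(4Q^2(\tau)\mu)$. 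Taking expectations of the bound for $\|\be_j(t)\|$, summing across $j$, and applying Jensen's inequality then delivers \eqref{eq:main}. For CD-RLS-2, the same telescope holds, but $\bdelta_j(t-1)$ is constructed from stale neighbor iterates; the resulting delay-induced perturbation is absorbed into the contraction by imposing the stricter $\rho<\rho_0$ that depends on $\lambda_{\max}(\bL),\gamma,\tau,\mu$, and the uniform bound on $\|\h_j(t)\|$, leaving the $O(1/t)$ and $O(\ln(t)/t)$ rates intact.
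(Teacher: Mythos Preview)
Your telescope $\bPhi_j(t)\be_j(t)=\bPhi_j(t-1)\be_j(t-1)+c_j(t)\h_j(t)\epsilon_j(t)-\rho\bdelta_j(t-1)$ is correct and coincides with the paper's. The gap is what follows: by unrolling and taking norms, you convert the consensus contribution into $\rho\|D_j(t)\|$ with $D_j(t)=\sum_{r=1}^t\bdelta_j(r-1)$, a nonnegative term you have no independent way to bound. Each summand is a Laplacian combination of the unknown errors $\be_{j'}(r-1)$, so showing $\|D_j(t)\|=o(t)$, let alone $O(\ln t)$, already presupposes the convergence you are trying to prove. Your attribution of the $\ln(t)/t$ factor to a ``logarithmic accumulation of Laplacian residuals'' is therefore circular, and the hypothesis $\rho<1/(\gamma\lambda_{\max}(\bL))$ does not by itself control $\|D_j(t)\|$.

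The paper avoids this by never isolating $D_j(t)$. It tracks the Lyapunov quantity $V(t):=E[\be^T(t)\bPhi(t)\be(t)]$ and proves the matrix inequality $\bPhi^{-1}(t)\preceq(\bPhi(t-1)-\rho\bL\otimes\mathbf{I}_p)^{-1}$, valid precisely when $\rho<1/(\gamma\lambda_{\max}(\bL))$, which makes the Laplacian's contribution to $V(t)-V(t-1)$ \emph{nonpositive}. Only the noise term survives, contributing $O(1/(t-1))$ per step through $E[\lambda_{\max}(\bPhi_j^{-1}(t-1))]\,\mathrm{tr}(\Rhj)$; summing these increments from $t_0+1$ to $t$ is the true source of the $\ln(t)$, while the crude bound for $r\le t_0$ yields the $\gamma t_0\sigma_j^2\mathrm{tr}(\Rhj)$ transient. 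The final passage to the SMRD uses Cauchy--Schwarz in the form $\{E[\|\be(t)\|]\}^2\le V(t)\,E[\lambda_{\max}(\bPhi^{-1}(t))]$, not Jensen. Two smaller corrections: $\bPhi_j(t)/t$ converges to $E[c_j(t)\h_j(t)\h_j^T(t)]$, which the paper only brackets between $2Q(\tau)\Rhj$ and $\Rhj$ (it is not equal to $2Q(\tau)\Rhj$); and for CD-RLS-2 the essential change in the recursion is an extra $c_j(t)$ multiplying the Laplacian term (so $\bL$ becomes $\bC(t)\bL$ in the one-step inequality), with $\rho_0$ chosen explicitly to keep that perturbed inequality valid, rather than a delay argument on stale iterates.
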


Theorem \ref{thm:convergence} establishes that the SMRD in
\eqref{SMRD} converges to zero at a rate $O(\ln(t)/t)$. The
constant of the convergence rate is related to $\Rhj$ through
$\lambda_{\max}(\Rhj^{-1})$, $\mathrm{tr}(\Rhj)$ and $\mu$; the
noise covariance $\sigma_j^2$, and the threshold $\tau$ through
$Q(\tau)$. Theorem \ref{thm:convergence} also indicates the impact
of the initial states (determined by $\gamma$ and $\s_j(0)$),
which disappears at a faster rate of $O(1/t)$. To guarantee
convergence, the step size $\rho$ must be small enough.


The proof for CD-RLS-3 is more challenging. Because a node does
not receive any information from its neighbors when censoring
is in effect, it has to rely on outdated neighboring estimates when the
incoming datum is not censored. This delay in percolating information
may cause computational instability. For this reason, we will impose
an additional constraint to guarantee that all local estimates
do not grow unbounded. In practice, this can be realized by truncating
local estimates when they exceed a certain threshold.

\noindent
(\textbf{as2}) \emph{ Local estimates $\{\s_j(t)\}_{j=1}^J$
    are uniformly bounded $\forall t\geq 0$.}

Convergence of CD-RLS-3 is then asserted as follows. Similar to
CD-RLS-1 and CD-RLS-2, the SMRD of CD-RLS-3 converges to zero with
rate $O(\ln(t)/t)$, as stated in the following theorem.


%
\begin{theorem}\label{thm:convergence3}
For CD-RLS-3 given by Algorithms 3, set
$\sigma_j(t)=\sigma_j$ and $\bPhi^{-1}_j(-1) = \gamma
\mathbf{I}_p$ per node $j$. Under (as1) and (as2) with $0 < \rho <
\rho_0$ as in Theorem \ref{thm:convergence}, there exists $t_0>0$
for which it holds $\forall t > t_0$, that
\begin{align}
     & \left\{ E \Big[ \Big(\sum_{j=1}^{J}
||\s_j(t)-\s_0||^2 \Big)^{\frac{1}{2}} \Big] \right\}^2 \leq
\frac{a+b\ln(t)}{t} \label{eq:extra}
\end{align}
where $a$ and $b$ are positive constants that depend on the upper
bounds of $\h_j(t)$ and $\s_j(t)$, parameters $\rho$ and $\tau$,
the covariance $\Rhj(t)$, the Laplacian matrix $\bL$, and $t_0$.
\end{theorem}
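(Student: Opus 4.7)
The plan is to extend the proof of Theorem~\ref{thm:convergence} by quantifying the extra error that the information-exchange delay introduces in CD-RLS-3. The natural starting point is to rewrite the actual update of node $j$ at an active slot $t$ (either $c_j(t)=1$ or the $d_{\max}$ deadline triggers a forced reception) as the CD-RLS-2 update \eqref{eq:new-d-rls-s}--\eqref{eq:new-d-rls-delta} driven by fresh neighbor estimates, plus a perturbation $\bm{\xi}_j(t)$ that captures the replacement of the nominal $\s_{j'}(t)$ and $\s_{j'}(t-1)$ by the most recently received $\s_{j'}(t-d_{j'}(t))$ with $0\le d_{j'}(t)\le d_{\max}$. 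This lets me write $\bdelta_j^{\text{CD-RLS-3}}(t-1)=\bdelta_j^{\text{CD-RLS-2}}(t-1)+\bm{\xi}_j(t)$, so that the error recursion for $\be_j(t):=\s_j(t)-\s_0$ inherits the form used in Theorem~\ref{thm:convergence} plus an additive term $-\rho\,\bPhi_j^{-1}(t)\bm{\xi}_j(t)$.

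The next step is to control $\|\bm{\xi}_j(t)\|$. By construction, $\bm{\xi}_j(t)$ is a finite linear combination, with coefficients determined by $\rho$, $\lambda=1$, and $|\N_j|$, of differences $\s_{j'}(r)-\s_{j'}(r')$ with $|r-r'|\le d_{\max}$, so (\textbf{as2}) yields a deterministic constant $C_\xi$ depending on the upper bound of $\s_j(t)$, on $\lambda_{\max}(\bL)$, and on $d_{\max}$ with $\|\bm{\xi}_j(t)\|\le C_\xi$. A second complication is that during idle slots $\bPhi_j^{-1}$ is not updated, so its time index effectively counts only the active subsequence; under (\textbf{as1}) the process $\{c_j(t)\h_j(t)\h_j^T(t)\}$ is mean ergodic with activation rate $2Q(\tau)$, whence the number of active slots up to $t$ is $2Q(\tau)t(1+o(1))$ almost surely, and $\lambda_{\max}(\bPhi_j^{-1}(t))=O(1/t)$ continues to hold for large $t$, matching the spectral scaling used inside the proof of Theorem~\ref{thm:convergence}.

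With these two ingredients, the final step is to insert $-\rho\,\bPhi_j^{-1}(t)\bm{\xi}_j(t)$ into the Lyapunov-style recursion of Theorem~\ref{thm:convergence} and repeat the telescoping summation. The native CD-RLS-2 contribution retains the $(a_0+b_0\ln(t))/t$ shape already established in \eqref{eq:main}. The extra cross term $\rho\,E\!\left[\be^T(t)\bPhi^{-1}(t)\bm{\xi}(t)\right]$ is handled via Cauchy--Schwarz together with $\|\bPhi_j^{-1}(t)\|_2=O(1/t)$, $\|\bm{\xi}_j(t)\|\le C_\xi$, and the bound $E[\|\be_j(t)\|]=O(1)$ from (\textbf{as2}), producing an $O(1/t)$ contribution per slot; the quadratic term $\rho^2\|\bPhi^{-1}(t)\bm{\xi}(t)\|^2$ is $O(1/t^2)$ and hence summable. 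Aggregating both contributions over the telescoping sum adds an extra $O(\ln(t)/t)$ piece, giving the overall $O(\ln(t)/t)$ rate claimed in \eqref{eq:extra}, with constants $a$ and $b$ absorbing $C_\xi$, $d_{\max}$, and the remaining parameters listed in the statement.

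The hard part will be this last bootstrap: because $\bm{\xi}_j(t)$ is a priori only bounded, not decaying, a crude bound would yield only $\text{SMRD}(t)=O(1/t)$ with a large constant and could even prevent the recursion from contracting. The rigorous fix hinges on the interplay between two facts that must be tracked along a single stochastic trajectory: $\bm{\xi}_j(t)$ always appears premultiplied by $\bPhi_j^{-1}(t)$, whose spectrum decays like $1/t$ by mean ergodicity of the active subsequence under (\textbf{as1}); and the error recursion is a contraction with rate $1-\Theta(\rho/t)$. Quantifying how the ergodic censoring rate, which is correlated with the idle-slot pattern through $c_j(t)$ itself, perturbs the spectral decay of $\bPhi_j^{-1}(t)$ is where the bulk of the technical work in Appendix~B will lie.
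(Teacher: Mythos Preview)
Your high-level plan---write CD-RLS-3 as CD-RLS-2 plus a delay perturbation, plug into the Lyapunov recursion of Theorem~\ref{thm:convergence}, and telescope---is exactly what the paper does. The gap is in how you control the cross term.

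You bound the perturbation only by a constant, $\|\bm{\xi}_j(t)\|\le C_\xi$, and then argue that ``$\bm{\xi}_j(t)$ always appears premultiplied by $\bPhi_j^{-1}(t)$'', so the cross term is $O(1/t)$. This is not what happens inside the Lyapunov function $E[\be^T(t)\bPhi(t)\be(t)]$ that Theorem~\ref{thm:convergence} uses. Starting from $\bPhi(t)\be(t)=\big[\bPhi(t-1)-\rho(\bC(t)\bL)\otimes\mathbf{I}_p\big]\be(t-1)+\bH(t)\bC(t)\beps(t)-\rho\bm{\xi}(t)$ and left-multiplying by $\bPhi^{-1/2}(t)$ before squaring, the cross term is
\[
2\rho\,E\!\left[\bm{\xi}^T(t)\,\bPhi^{-1}(t)\big(\bPhi(t-1)-\rho(\bC(t)\bL)\otimes\mathbf{I}_p\big)\be(t-1)\right].
\]
Here $\|\bPhi^{-1}(t)\|_2=O(1/t)$ but $\|\bPhi(t-1)\|_2=O(t)$, so their product is $O(1)$: the $\bPhi^{-1}(t)$ prefactor you are counting on is neutralized. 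With only $\|\bm{\xi}(t)\|\le C_\xi$ and $\|\be(t-1)\|=O(1)$ from (as2), the cross term is $O(1)$ per slot, the telescoped sum grows like $O(t)$, and after dividing by $t$ you get $\text{SMRD}(t)=O(1)$---no decay at all. Your form ``$\rho\,E[\be^T(t)\bPhi^{-1}(t)\bm{\xi}(t)]$'' is simply not the cross term that arises in this Lyapunov function.

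The paper closes this gap with a preliminary lemma: under (as1)--(as2), $E[\|\be_j(t)-\be_j(t-1)\|]\le M/t$ for $t\ge t_0$, because every nonzero update of $\be_j$ carries the factor $\bPhi_j^{-1}(t)$ and the remaining factors are bounded. Since your $\bm{\xi}_j(t)$ (the paper calls it $\tilde{\be}(t)$) is a sum of at most $|\N_j|(d_{\max}-1)$ such one-step increments, it follows that $E[\|\bm{\xi}(t)\|]=O(1/t)$, not merely $O(1)$. With that decay in hand, the cross term above becomes $O(1/t)\cdot O(1/t)\cdot O(t)\cdot O(1)=O(1/t)$ and the quadratic term $\rho^2 E[\bm{\xi}^T(t)\bPhi^{-1}(t)\bm{\xi}(t)]=O(1/t^3)$; telescoping then gives the claimed $O(\ln(t)/t)$ bound. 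In short, you need to show the \emph{perturbation itself} decays like $1/t$ before entering the Lyapunov recursion---the $\bPhi^{-1}(t)$ weighting does not do that job for you.
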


Although the bounds asserted by Theorems
\ref{thm:convergence} and \ref{thm:convergence3} could be loose,
they demonstrate that $\lim\sup_{t \rightarrow \infty}
\text{SMRD}(t) = 0$, which establishes that the decentralized
estimates converge to the ground truth asymptotically.

\subsection{Threshold setting and variance estimation}\label{sec:threshold}

The threshold $\tau$ influences considerably the performance of
all CD-RLS algorithms. Its value trades off estimation accuracy for
computation and communication overhead. We provide a simple criterion for
setting $\tau$ using the average censoring ratio $\pi^{*}$, which is
defined as the number of censored data over the total number of
data \cite{Msechu2012-cratio}. The goal is to choose $\tau$ so
that the actual censoring ratio approaches $\pi^{*}$ as $t$ goes
to infinity -- since we are dealing with streaming big data, such
an asymptotic property is certainly relevant. When $t$ is large
enough, $\s$ is very close to $\s_0$; thus, the innovation
$x_j(t)-\h_j^T(t)\s_j(t-1)\approx
x_j(t)-\h_j^T(t)\s_0=\epsilon_j(t)\sim\N(0,\sigma_j^2)$. As a
consequence, $\Pr(c_j(t)=0) = \Pr(|x_j(t)-\h_j^T(t)\s_j(t-1)|\leq
\tau\sigma_j)\approx \Pr(|\epsilon_j(t)|\leq
\tau\sigma_j)=\Pr(|\epsilon_j(t)/\sigma_j|\leq \tau)=1-2Q(\tau)$,
where the last equality holds because
$\epsilon_j(t)/\sigma_j\sim\N(0,1)$. Therefore, $\pi^{*} = \lim_{t
\rightarrow \infty}
\frac{1}{t}\sum_{\tau=0}^{t}E[c_j(\tau)]\approx 1-2Q(\tau)$, which
implies that
\[
\tau = Q^{-1}((1-\pi^{*})/2)\;.
\]

Given the average censoring ratio $\pi^{*}$, Table \ref{tab:comp}
compares the average per step per node communication and
computational costs of D-RLS and the proposed CD-RLS algorithms.
We assume that transmitting or receiving a $p$-dimensional local
estimate vector to or from a neighboring node incurs a cost of
$p$. Thus, for D-RLS and CD-RLS-1, the average communication costs
are both $2p |\mathcal{E}|/J$. In CD-RLS-2, a node does not
transmit to its neighbors when it censors a datum, which leads to
an average communication cost of $2p |\mathcal{E}|(1-\pi^*)/J$.
CD-RLS-3 avoids communication over a link as long as one of the
two end nodes censors a datum, and hence reduces the cost to $2p
|\mathcal{E}|(1-\pi^*)^2/J$. As discussed in Section
\ref{sec:c-d-rls}, the computational costs of CD-RLS-1 for the
non-censoring and censoring cases are $O(7p^2/2)$ and $O(p^2)$,
respectively. For the censoring case, CD-RLS-2 and CD-RLS-3 reduce
their computational costs to $O(p)$, and are more computationally
efficient.

\begin{table}\addtolength{\tabcolsep}{-0pt} \centering
\caption{Average per step per node communication and computational
costs, given the average censoring ratio
$\pi^{*}$.}\label{tab:comp} \vspace{-0.3cm}
\begin{tabular}{ c*{2}{|c}} \hline
Algorithm   & Communication & Computation \\
\hline
D-RLS    & $2p |\mathcal{E}|/J$            & $7p^2/2 + O(p)$ \\
CD-RLS-1 & $2p |\mathcal{E}|/J$            & $7p^2 (1-\pi^*)/2 + p^2 \pi^* + O(p) $ \\
CD-RLS-2 & $2p |\mathcal{E}|(1-\pi^*)/J$   & $7p^2 (1-\pi^*)/2 + O(p)$ \\
CD-RLS-3 & $2p |\mathcal{E}|(1-\pi^*)^2/J$ & $7p^2 (1-\pi^*)/2 + O(p)$ \\
\hline
\end{tabular} \vspace{-0.3cm}
\end{table}

If the variances $\{\sigma_j^2\}$ were known, one could simply
choose $\sigma_j(t)=\sigma_j$. However, $\sigma_j$ in practice is
often unknown. In this case, we consider the running average
$\sigma_j^2(t+1) \approx
t^{-1}\sum_{\tau=1}^{t+1}[x_j(\tau)-\h_j^T(\tau)\s_0]^2 =
(t-1)\sigma_j^2(t)/t+[x_j(t+1)-\h_j^T(t+1)\s_0]^2/t$, which
suggests the recursive variance estimate
\[
\sigma_j^2(t+1) = (t-1)\sigma_j^2(t)/t+[x_j(t+1)-\h_j^T(t+1)\s_j(t)]^2/t\;.
\]

\begin{figure}
    \centering
    \includegraphics[height=6.2cm] {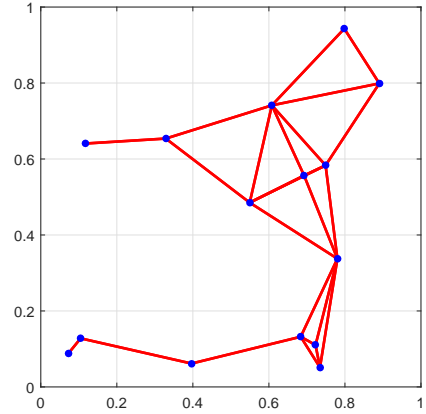}
    \caption{The network topology used in the numerical experiments.}
    \label{eps:network}
\end{figure}

\section{Numerical Experiments}
\label{sec:numerical}

This section provides numerical results to validate the
effectiveness of our novel censoring strategies. We simulate a
network of $J=15$ nodes, which are uniformly randomly deployed
over a $1 \times 1$ square. Two nodes within communication range
$0.3$ are deemed as being neighbors. The resultant network
topology is depicted in Fig. \ref{eps:network}. We compare six
algorithms: the centralized adaptive censoring (AC)-RLS that runs
in every node independently, the distributed diffusion least
mean-square (Diffusion-LMS) algorithm \cite{Bianchi2013,
Mateos2009-JASP}, D-RLS without censoring \cite{Mateos2009-DRLS},
and the three censoring-based D-RLS algorithms, namely CD-RLS-1,
CD-RLS-2 and CD-RLS-3. All algorithms are evaluated on two data
sets, one synthetic and one real. The empirical SMRD is used as
performance metric.

For the synthetic data set, the unknown $\s_0$ is $p$-dimensional
with $p=4$. The setting is the one in \cite{Mateos2009-DRLS},
where WSN-based decentralized power spectrum estimation is sought
for a signal modeled as an autoregressive process. In this
context, consider an auxiliary sequence $r_j(t)$ that evolves
according to $r_j(t) = (1-q)\beta_j r_j(t-1)+\sqrt{q}\omega_j(t)$.
Starting from $r_j(t)$, the row $\h_j^T(t)$ is formed by taking
the next $p$ observations, namely $\h_j^T(t) = [r_j(t+p-1);
\ldots; r_j(t)]$. Parameters are selected as $q=0.5$,
$\beta_j\sim\mathcal{U}(0,1)$, and also uniformly distributed
driving white noise $\omega_j(t)\sim\mathcal{U}(-\sqrt
3\sigma_{\omega_j},\sqrt 3\sigma_{w_j})$ with
$\sigma^2_{\omega_j}\sim\mathcal{U}(0,2)$. Observation of node $j$
is subject to additive white Gaussian noise, with covariance
$\sigma_j^2 = 10^{-3} \alpha_j$, where
$\alpha_j\sim\mathcal{U}(0,1)$. The true signal vector is
$\mathbf{s}_0 = \mathbf{1}_p$, for which $\lambda=1$ is set for
all algorithms. For D-RLS, CD-RLS-1, CD-RLS-2 and CD-RLS-3, the
step size $\rho=0.01$ and $\bPhi_j^{-1}(0)=\gamma \mathbf{I}_p$
where $\gamma = 30$, leading to fastest convergence of D-RLS.
Regarding the four censoring-based algorithms AC-RLS, CD-RLS-1,
CD-RLS-2 and CD-RLS-3, we set the average censoring ratio to
$\pi^{*} = 0.6$, which is approached using $\tau =
Q^{-1}((1-\pi^{*})/2)\approx 0.84$. The variances $\sigma_j^2$ are
estimated in an online manner as described in Section
\ref{sec:threshold}. AC-RLS uses $\bPhi_j^{-1}(0)=\gamma
\mathbf{I}_p$, where $\gamma = 10^5$ leads to the fastest
convergence. Diffusion-LMS uses the nearest-neighbor diffusion
matrix and $1.5/\sqrt{t}$ step size, which is tuned to obtain
fastest convergence. For all curves obtained by running the
algorithms, the ensemble averages are approximated via sample
averaging over 100 Monte Carlo runs.

\begin{figure}
    \centering
    \includegraphics[height=6.2cm] {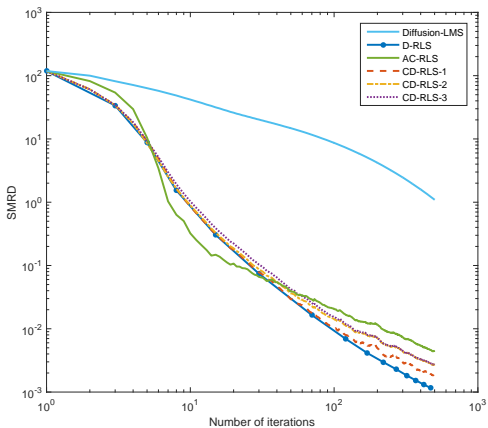}
    \caption{SMRD of the six algorithms versus number of iterations.}
    \label{eps:iteration}
\end{figure}

\begin{figure}
    \centering
    \includegraphics[height=6.2cm] {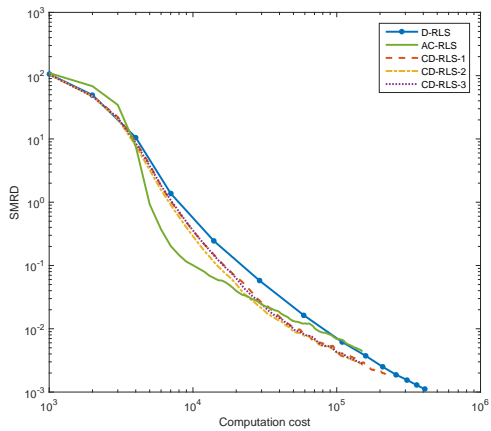}
    \caption{SMRD of the five algorithms versus computational cost, defined as the number of multiplications.}
    \label{eps:computation}
\end{figure}

\begin{figure}
    \centering
    \includegraphics[height=6.2cm] {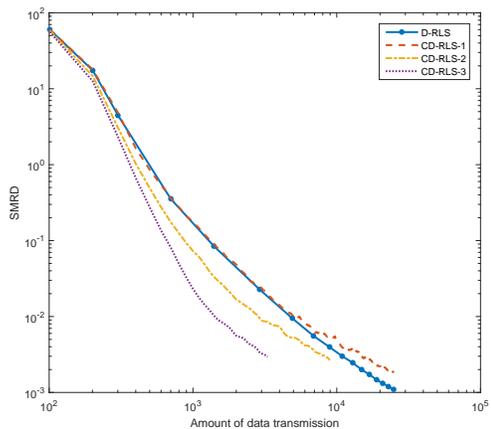}
    \caption{SMRD of the four decentralized algorithms versus amount of data transmission in the unicast mode. }
    \label{eps:transmission}
\end{figure}

Fig. \ref{eps:iteration} depicts the SMRD versus the number of
iterations. Not surprisingly, since D-RLS does not censor data,
its convergence rate with respect to the number of iterations is
the fastest. Among the three proposed CD-RLS algorithms, CD-RLS-2
and CD-RLS-3 are slower than CD-RLS-1, because the former two
incur smaller communication cost than the latter. Though CD-RLS-3
adopts a more aggressive censoring strategy than CD-RLS-2, its
convergence does not degrade as confirmed by Fig.
\ref{eps:iteration}. AC-RLS is the slowest among all except for
Diffusion-LMS, because it is run at all nodes independently,
without sharing information over the network. Even though the SMRD
of Diffusion-LMS vanishes as $t\rightarrow \infty$ (with rate
$1/t$), its finite-sample SMRD decays slower than our CD-RLS
schemes for which SMRD also vanishes as $t\rightarrow \infty$
(with rate upper bounded by $\ln(t)/t$). This is analogous to
\emph{centralized} LMS that for finite samples exhibits SMRD
decaying slower than that of \emph{centralized} RLS. Note that contrary to the analysis in \cite{Bianchi2013} and \cite{Morral2017}, the cost function here is not differentiable and thus the Diffusion-LMS does not achieve the traditional linear rate. We shall not
compare with Diffusion-LMS in the rest of the numerical
experiments.

The merits of censoring are further appreciated when one
considers computational costs. Recall that the target average censoring
ratio is $\pi^{*} = 0.6$, meaning that $3/5$ of the data are
discarded (actual values are $0.6320$ for AC-RLS, $0.6292$ for
CD-RLS-1, $0.6277$ for CD-RLS-2, and $0.6237$ for CD-RLS-3,
averaged over 100 runs). As confirmed by Fig.
\ref{eps:computation}, the three CD-RLS algorithms consume considerably
less computational resources relative to D-RLS that does not censor data.
Indeed, whenever a datum is censored, CD-RLS-1 only requires $2/7$
of the computations relative to D-RLS, while CD-RLS-2 and
CD-RLS-3 incur minimal computational overhead. Although AC-RLS is the most
computationally efficient algorithm at the beginning,
absence of collaboration undermines its performance in steady state.

\begin{figure}
    \centering
    \includegraphics[height=6.2cm] {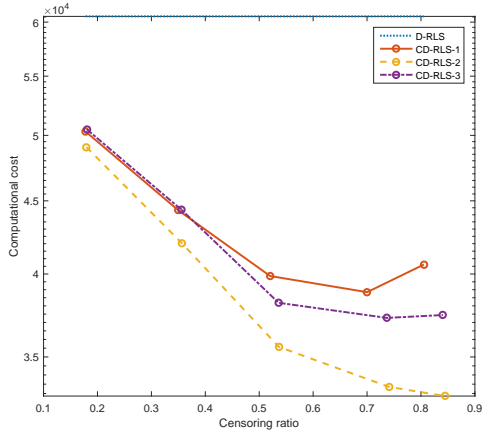}
    \caption{Computational cost of the four decentralized algorithms for variable censoring ratios when target SMRD is $0.015$.}
    \label{eps:computation-cr}
\end{figure}

\begin{figure}
    \centering
    \includegraphics[height=6.2cm] {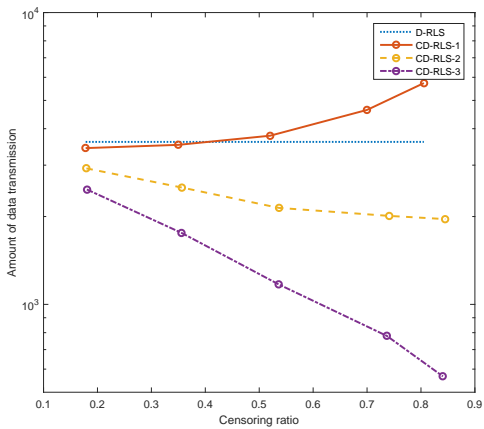}
    \caption{Amount of data transmission of the four decentralized algorithms for variable censoring ratios when target SMRD is $0.015$.}
    \label{eps:communication-cr}
\end{figure}

Regarding the amount of data exchanged to communicate local
estimates in a unicast mode, CD-RLS-1 is the worst because nodes
need to transmit their local estimate to neighbors, no matter
whether local data are censored or not. Fig.
\ref{eps:transmission} corroborates that CD-RLS-2 and CD-RLS-3
show significant improvement over D-RLS, demonstrating their
potential for reducing both communication and computation costs in
solving decentralized linear regression problems over large-scale
networks.

We further numerically quantify the savings of computation and
communication that the three censoring-based D-RLS algorithms
enjoy over RLS without censoring. We set the target SMRD to
$0.015$ and plot the computational and communication costs
required to reach it. According to Fig. \ref{eps:computation-cr},
the computational costs of the three censoring-based algorithms
decrease to about half of that of D-RLS as the censoring ratio
grows to $0.7$, while CD-RLS-2 outperforms the other two. Though
CD-RLS-2 uses more iterations (hence more data) to achieve the
target SMRD than CD-RLS-1 (see Fig. \ref{eps:iteration}), it
requires less computation when a datum is censored. On the other
hand, CD-RLS-3 uses more iterations to achieve the target SMRD
than CD-RLS-2, and hence it incurs more computational cost. The
saving of CD-RLS-3 over CD-RLS-2 is mainly in the communication
cost. In Fig. \ref{eps:communication-cr}, the communication cost
of CD-RLS-2 and CD-RLS-3 decreases as the censoring ratio grows,
but that of CD-RLS-1 increases and is larger than that of D-RLS
when the censoring ratio exceeds $0.5$. CD-RLS-3 exhibits best
performance in terms of communication cost.

Next, we vary $\pi$ and evaluate its impact on SMRD, as shown in
Fig. \ref{eps:smrd-cr}. The SMRD here is computed after 500
iterations. When $\pi$ is close to $0.5$, meaning about $1/2$ of
the data is censored, the three proposed CD-RLS algorithms are
still able to reach SMRD of $10^{-4}$, which is the limit of D-RLS
without censoring. Among the three algorithms, CD-RLS-1 exhibits
the best SMRD curve, but its computation and communication costs
are the highest. AC-RLS does not perform well especially for low
censoring ratios due to the lack of network-wide collaboration.
CD-RLS-2 and CD-RLS-3 perform comparably in this experiment.

\begin{figure}
    \centering
    \includegraphics[height=6.2cm] {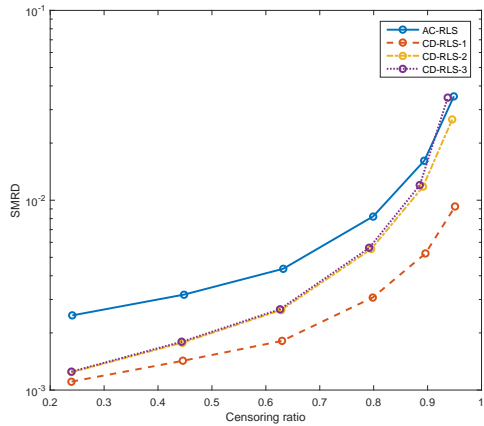}
    \caption{SMRD after 500 iterations of the four censoring algorithms for variable censoring ratios.}
    \label{eps:smrd-cr}
\end{figure}

\begin{figure}
    \centering
    \includegraphics[height=6.2cm] {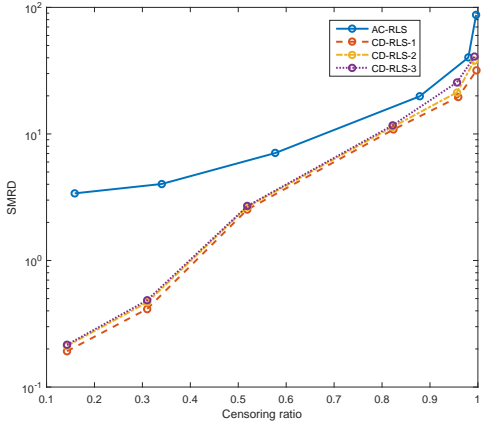}
    \caption{SMRD of the four censoring algorithms versus the censoring ratio on a real data set of protein tertiary structures.}
    \label{eps:smrd-real}
\end{figure}

The effectiveness of the novel censoring-based strategies is
further assessed on a real data set of protein tertiary structures
\cite{Realdata}. The premise here is that a given dataset is not
available at a single location, but it is distributed over a
network whose nodes are interested in obtaining accurate
regression coefficients while suppressing the communication and
computational overhead. Again, the graph in Fig. \ref{eps:network}
is used to model the network of regression-performing agents. The
number of control variables is $p=9$. The first $45,720$ (out of
$45,730$) observations are normalized and divided evenly into
$J=15$ parts, one per node. For CD-RLS-1, CD-RLS-2 and CD-RLS-3,
we set $\rho=0.05$ and $\bPhi_j^{-1}(0)=5 \mathbf{I}_p$, while for
AC-RLS we choose $\gamma=10$. The ground truth vector $\s_0$ is
estimated by solving a batch least-squares problem on the entire
data set. Similar to what we deduced from Fig. \ref{eps:smrd-cr} in
the synthetic data set, the novel CD-RLS algorithms outperform
AC-RLS in terms of SMRD, as one varies the average censoring ratio
from $15\%$ to nearly $100\%$ in Fig. \ref{eps:smrd-real}.

\begin{figure}
    \centering
    \includegraphics[height=6.2cm] {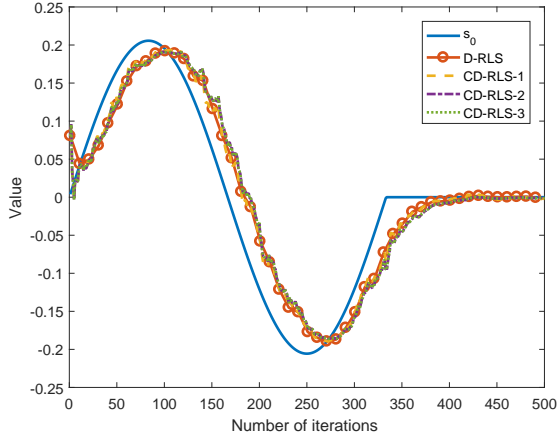}
    \caption{First entries in the vector estimates of the four algorithms versus number of iterations when $\lambda = 0.95$.}
    \label{eps:lambda}
\end{figure}

When $\lambda < 1$, the three censoring-based strategies are also
able to track time-varying signals well. Note that to track the
signal dynamics in this case, the censoring ratio cannot be too
large. We use the same setting of the synthetic data but change
the true $\s_0$ such that its $i$th element is
$\tilde{\beta}_i\sin(3 \pi t/500)$ when $t \leq 1000/3$, and
remains constant after $t = 1000/3$. The magnitudes
$\tilde{\beta}_i$ are i.i.d. and follow $\mathcal{U}(0,1)$. The
parameters of the four decentralized algorithms are the same as
those in the previous synthetic experiments, except that the
censoring ratio is $0.3$ when the censoring strategies are
applied. Fig. \ref{eps:lambda} depicts the evolution of the first
entries in the vector estimates of the four algorithms. They show
similar tracking performance, but the censoring-based algorithms
incur lower communication and computation costs over D-RLS.

\section{Concluding Remarks} 
\label{sec:concluding_remarks}

This paper introduced three data-adaptive censoring strategies that
significantly reduce the computation and communication costs of
the RLS algorithm over large-scale networks. The basic idea
behind these strategies is to avoid inefficient computation and
communication when the local observations and/or the neighboring
messages are not informative. We proved
convergence of the resulting algorithms in the mean-square deviation sense. Numerical
experiments validated the merits of the novel schemes.

The notion of identifying and discarding less informative
observations can be widely used in various large-scale online
machine learning tasks including nonlinear regression, matrix
completion, clustering and classification, to name a few. These constitute our future research
directions.

\appendices

\section{Equivalent Form of D-RLS}
\label{sec:eq-DRLS}
%
%
Here we prove that D-RLS recursions \eqref{eq:d-rls-phi} - \eqref{eq:d-rls-v} are
equivalent to \eqref{eq:d-rls-phi}, \eqref{eq:new-d-rls-s} and
\eqref{eq:new-d-rls-delta}. It follows from \eqref{eq:d-rls-s} that
\begin{align} \label{eq:app-a-001}
& \bPhi_j(t)\s_j(t)-\lambda\bPhi_j(t-1)\s_j(t-1)\\
&= \Big[\bpsi_j(t)-\frac{1}{2}\sum_{j'\in \N_j}(\bv_j^{j'}(t-1)-\bv_{j'}^{j}(t-1)) \Big] \nonumber \\
&-\lambda \Big[\bpsi_j(t-1)-\frac{1}{2}\sum_{j'\in
\N_j}(\bv_j^{j'}(t-2)-\bv_{j'}^{j}(t-2)) \Big]. \nonumber
\end{align}
Applying the matrix inversion lemma to \eqref{eq:d-rls-phi} yields
\begin{align} \label{eq:app-a-002}
\bPhi_j(t) = \lambda \bPhi_j(t-1) + \h_j(t)\h_j^T(t).
\end{align}
Substituting $\bpsi_j(t) - \lambda \bpsi_j(t-1) = \h_j(t) x_j(t)$
from \eqref{eq:d-rls-psi} and $\lambda \bPhi_j(t-1) = \bPhi_j(t) -
\h_j(t)\h_j^T(t)$ from \eqref{eq:app-a-002} into
\eqref{eq:app-a-001}, leads to
\begin{align} \label{eq:app-a-003}
& \bPhi_j(t) \big[ \s_j(t)-\s_j(t-1) \big] = \h_j(t)\big[x_j(t)-\h_j^T(t)\s_j(t-1) \big] \nonumber \\
& \hspace{6em} - \frac{1}{2}\sum_{j'\in \N_j}(\bv_j^{j'}(t-1)
- \lambda \bv_j^{j'}(t-2)) \nonumber \\
& \hspace{6em} + \frac{1}{2}\sum_{j'\in \N_j}(\bv_{j'}^{j}(t-1) -
\lambda\bv_{j'}^{j}(t-2)).
\end{align}

Next, we will show that if $\bdelta(t)$ is defined as
\begin{align} \label{eq:app-a-004}
& \bdelta(t) := \frac{1}{2\rho}\sum_{j'\in \N_j}(\bv_j^{j'}(t) -
\lambda \bv_j^{j'}(t-1)) \nonumber \\
& \hspace{2em} - \frac{1}{2\rho}\sum_{j'\in \N_j}(\bv_{j'}^{j}(t)
- \lambda\bv_{j'}^{j}(t-1))
\end{align}
then its update is exactly \eqref{eq:new-d-rls-delta}. This can be
done by taking the difference between slots $t$ and $t-1$ for
\eqref{eq:app-a-004}, and substituting the update of $\bv_j^{j'}$
in \eqref{eq:d-rls-v}. Due to \eqref{eq:app-a-004}, it follows that
\eqref{eq:app-a-003} is equivalent to
\begin{align} \label{eq:app-a-005}
& \bPhi_j(t) \big[ \s_j(t)-\s_j(t-1) \big] =
\h_j(t)\big[x_j(t)-\h_j^T(t)\s_j(t-1) \big] \nonumber \\
& \hspace{10.3em} - \rho \bdelta(t-1).
\end{align}
Left multiplying
\eqref{eq:app-a-005} with $\bPhi_j^{-1}(t)$, yields the update of $\s_j$ in
\eqref{eq:new-d-rls-s}, and completes the proof.

\section{Proof of Theorem \ref{thm:convergence}}
\label{sec:proof-1}
%
%
\begin{proof}
We need the following lemma in \cite[Chapter 7, Theorem 4]{Grimmett2011}.
\begin{lemma}\label{lem:convergence}
Let $X,X_1,X_2,...$ be random variables on some probability space. If $X_n\rightarrow X$ in probability and $Pr(|X_n|\leq k) = 1$ for all $n$ and some $k$, then $X_n\rightarrow X$ in $r$th mean for all $r\geq 1$.
\end{lemma}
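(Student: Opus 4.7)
The plan is a direct truncation argument, the classical route for promoting convergence in probability to convergence in $r$th mean under a uniform almost-sure bound; no more machinery than Chebyshev-style splitting is needed.

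First I would transfer the almost-sure bound to the limit $X$. Since $X_n \to X$ in probability, there exists a subsequence $X_{n_j}$ with $X_{n_j} \to X$ almost surely. The hypothesis $\Pr(|X_{n_j}| \le k) = 1$ then forces $|X| \le k$ almost surely, and hence $|X_n - X| \le 2k$ almost surely for every $n$. This bound is the only structural fact I will use beyond convergence in probability.

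Next, for an arbitrary $\epsilon > 0$, the expectation is split at the level $\epsilon$:
\begin{align*}
E[|X_n - X|^r] &= E\!\left[|X_n - X|^r \mathbf{1}\{|X_n - X| \le \epsilon\}\right] \\
&\quad + E\!\left[|X_n - X|^r \mathbf{1}\{|X_n - X| > \epsilon\}\right].
\end{align*}
The first term is bounded by $\epsilon^r$. On the event $\{|X_n - X| > \epsilon\}$, the integrand is at most $(2k)^r$, so the second term is bounded by $(2k)^r \Pr(|X_n - X| > \epsilon)$, which vanishes as $n \to \infty$ by convergence in probability. Taking $\limsup_{n \to \infty}$ followed by $\epsilon \downarrow 0$ yields $E[|X_n - X|^r] \to 0$, which is convergence in $r$th mean.

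There is no genuinely hard step; the only point requiring attention is the transfer of the deterministic bound to $X$ via a subsequence, after which the argument is purely elementary. One could alternatively invoke the dominated convergence theorem with dominating function $(2k)^r$ along an almost-surely convergent subsequence and then apply the standard subsequence principle (if every subsequence has a further subsequence converging in $L^r$ to $X$, then the whole sequence converges in $L^r$), but the explicit truncation above is self-contained and avoids this detour.
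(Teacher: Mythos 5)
Your proof is correct. The paper does not actually prove this lemma — it cites it directly as \cite[Chapter 7, Theorem 4]{Grimmett2011} — and your truncation argument (transferring the bound $|X|\leq k$ to the limit via an almost-surely convergent subsequence, then splitting $E[|X_n-X|^r]$ at level $\epsilon$ to get the bound $\epsilon^r + (2k)^r\Pr(|X_n-X|>\epsilon)$) is precisely the standard textbook proof of that result, with no gaps.
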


Starting with
CD-RLS-1, the proof proceeds in five stages.

\textbf{Stage 1.} We first investigate the spectral properties of
$\bPhi_j(t)$ when $t$ is sufficiently large. Letting $\lambda=1$
and applying the matrix inversion lemma to the censoring form \eqref{eq:d-rls-phi}, we have
\begin{align}\label{eq:phi-one}
\bPhi_j(t) = \bPhi_j(t-1) + c_j(t)\h_j(t)\h_j^T(t).
\end{align}
Summing up from $r=1$ to $r=t$ and using the telescopic cancellation, \eqref{eq:phi-one} yields
\begin{align}\label{eq:phi-two}
\bPhi_j(t) =
\sum_{r=1}^{t}c_j(r)\h_j(r)\h_j^T(r)+\gamma^{-1}\mathbf{I}_{p}.
\end{align}
Thanks to the strong law of large numbers, $\bPhi_j(t)/t$
converges to $E[c_j(t)\h_j(t)\h_j^T(t)]$ almost surely as $t
\rightarrow \infty$. Observe that
\begin{align} \label{eq:newnew-01}
  & E[c_j(t)\h_j(t)\h_j^T(t)] \\
= & E\big[\h_j(t)\h_j^T(t)E[c_j(t)|\h_j(t),\s_j(t-1)]\big] \nonumber \\
= & E[\h_j(t)\h_j^T(t) \Pr(c_j(t)=1|\h_j(t),\s_j(t-1))]. \nonumber \\
= & E\left[\h_j(t)\h_j^T(t) \left(
1-\int_{-\tau+\sigma_j^{-1}[\h_j^T(t)(\s_j(t-1)-\s_0)]}^{\tau+\sigma_j^{-1}[\h_j^T(t)(\s_j(t-1)-\s_0)]}\phi(x)dx
\right)\right]. \nonumber
\end{align}
Observing the integral in \eqref{eq:newnew-01}, we know that
\begin{align}\label{eq:newnew-02}
1 >  & 1 - \int_{-\tau+\sigma_j^{-1}[\h_j^T(t)(\s_j(t-1)-\s_0)]}^{\tau+\sigma_j^{-1}[\h_j^T(t)(\s_j(t-1)-\s_0)]}\phi(x)dx \nonumber \\
\geq & 1 - \int_{-\tau}^{\tau}\phi(x)dx = 2Q(\tau)
\end{align}
where the event set that the second inequality strictly holds
(namely, ``$\geq$'' becomes ``$>$'') is with nonzero measure.
Thus, substituting \eqref{eq:newnew-02} into \eqref{eq:newnew-01}
yields $$ E[c_j(t)\h_j(t)\h_j^T(t)] \prec E[\h_j(t)\h_j^T(t)] =
\Rhj $$ and $$ E[c_j(t)\h_j(t)\h_j^T(t)] \succ 2Q(\tau)
E[\h_j(t)\h_j^T(t)] = 2Q(\tau)\Rhj. $$

Since $\bPhi_j(t)/t$ converges to
$E[c_j(t)\h_j(t)\h_j^T(t)]$ almost surely as $t \rightarrow
\infty$ and $\h_j(t)$ is uniformly bounded such that
$\bPhi_j(t)/t$ is also bounded (cf. \eqref{eq:phi-two}), we have
$E[\|\bPhi_j(t)/t\|_2]$ converges to
$E[\|c_j(t)\h_j(t)\h_j^T(t)\|_2]$ as $t \rightarrow \infty$
by lemma \ref{lem:convergence}. Therefore,
$2Q(\tau)\Rhj \prec E[c_j(t)\h_j(t)\h_j^T(t)] \prec \Rhj$ implies
that there exists $t_1 > 0$, for which it holds $\forall t \geq
t_1$ that
\begin{align}
\hspace{-1em}2Q(\tau) \|\Rhj\|_2 < E[\|\bPhi_j(t)/t\|_2 ] <
\|\Rhj\|_2 \nonumber
\end{align}
and consequently the expected maximum eigenvalue of $\bPhi_j(t)$
satisfies
\begin{align}\label{eq:ineq_phi-001}
\hspace{-1em}2Q(\tau) \lambda_{\max}(\Rhj) t < E[\lambda_{\max}(
\bPhi_j(t)) ] < \lambda_{\max}(\Rhj) t.
\end{align}
Observe that $t \bPhi^{-1}_j(t)$ converges to
$\left\{E[c_j(t)\h_j(t)\h_j^T(t)]\right\}^{-1}$ almost surely as
$t \rightarrow \infty$ due to the convergence of $\bPhi_j(t)/t$ to
$E[c_j(t)\h_j(t)\h_j^T(t)]$. Since eigenvalues of $\bPhi_j(t)/t$
are bounded below by a positive constant when $t$ is large enough,
there exists $t_2
> 0$ such that $t\bPhi^{-1}_j(t)$ is bounded $\forall t \geq t_2$.
Following the same analysis to obtain \eqref{eq:ineq_phi-001}, it
holds $\forall t \geq t_2$ that
\begin{align}\label{eq:ineq_phi-002}
\hspace{-1em} \lambda_{\max}(\Rhj^{-1})/t < E[\lambda_{\max}(
\bPhi^{-1}_j(t)) ] < \lambda_{\max}(\Rhj^{-1}) / (2Q(\tau) t).
\end{align}
Letting $t_0 := \max(t_1, t_2)$, \eqref{eq:ineq_phi-001} and
\eqref{eq:ineq_phi-002} hold $\forall t \geq t_0$.

\textbf{Stage 2.} Rewrite the update of $\s_j$ as
\begin{align*}
\s_j(t) & = \s_j(t-1)+c_j(t) \bPhi_j^{-1}(t)\h_j(t)\big[x_j(t)-\h_j^T(t)\s_j(t-1)\big] \\
& -\rho\bPhi_j^{-1}(t)\bdelta_j(t-1).
\end{align*}
Note also that for $\lambda=1$, the update of $\bdelta_j$ is
equivalent to (cf. \eqref{eq:delta-blabla})
\begin{align*}
& \bdelta_j(t-1) = \sum_{j'\in\N_j} \big[\s_j(t-1)-\s_{j'}(t-1)
\big].
\end{align*}

Letting $\be_j(t) := \s_j(t)-\s_0$, the estimation error obeys the recursion
\begin{align*}
\be_j(t) &=\be_j(t-1)+c_j(t)\bPhi_j^{-1}(t)\h_j(t)[x_j(t)-\h_j^T(t)\s_j(t-1)] \\
&-\rho\bPhi_j^{-1}(t)\sum_{j'\in\N_j} \big[
\be_j(t-1)-\be_{j'}(t-1) \big].
\end{align*}
Substituting $x_j(t)=\h_j(t)\s_0+\epsilon_j(t)$ to eliminate
$\s_j(t-1)$, we obtain
\begin{align}\label{eq:conv1}
\be_j(t) & =\be_j(t-1)-c_j(t)\bPhi_j^{-1}(t)\h_j(t)\h_j^T(t)\be_j(t-1)\nonumber\\
& +c_j(t)\bPhi_j^{-1}(t)\h_j(t)\epsilon_j(t) \nonumber \\
& -\rho \bPhi_j^{-1}(t)\sum_{j'\in\N_j} \big[
\be_j(t-1)-\be_{j'}(t-1) \big].
\end{align}

Left multiplying \eqref{eq:conv1} with $\bPhi_j(t)$  yields
\begin{align}\label{eq:conv2}
  & \bPhi_j(t)\be_j(t)  \\
  = & \bPhi_j(t)\be_j(t-1)-c_j(t) \h_j(t)\h_j^T(t)\be_j(t-1)\nonumber\\
  + & c_j(t) \h_j(t)\epsilon_j(t)  -\rho \sum_{j'\in\N_j} \big[\be_j(t-1)-\be_{j'}(t-1) \big] \nonumber \\
  = & \bPhi_j(t-1)\be_j(t-1)  \nonumber\\
  + & c_j(t)\h_j(t)\epsilon_j(t)-\rho\sum_{j'\in\N_j} \big[ \be_j(t-1)-\be_{j'}(t-1)
  \big]. \nonumber
\end{align}

Our convergence analysis result will rely on a matrix form of
\eqref{eq:conv2} that accounts for all nodes $j$. Define vectors
$\be(t):=[\be_1^T(t), \ldots, \be_J^T(t)]^T \in \mathbb{R}^{Jp}$,
$\beps(t):=[\epsilon_1^T(t),\ldots,\epsilon_J^T(t)]^T \in
\mathbb{R}^J$, as well as block-diagonal matrices $\bPhi(t) :=
\mathrm{diag}(\{\bPhi_j(t)\}) \in \mathbb{R}^{Jp \times Jp}$,
$\bC(t) := \mathrm{diag}(\{c_j(t)\})\in \mathbb{R}^{J \times J}$,
and $\bH(t) := \mathrm{diag}(\{\h_j(t)\}) \in \mathbb{R}^{Jp
\times J}$. Then \eqref{eq:conv2} can be written in matrix form as
\begin{align}\label{eq:matrixform-1}
  & \bPhi(t)\be(t)\nonumber \\
= &
\big[\bPhi(t-1)-\rho\bL\otimes\mathbf{I}_p\big]\be(t-1)+\bH(t)\bC(t)\beps(t)
\end{align}
which after left multiplication with $\bPhi^{-\frac{1}{2}}(t)$ yields
\begin{align}\label{eq:matrixform-2}
  \bPhi^{\frac{1}{2}}(t)\be(t) = &
\bPhi^{-\frac{1}{2}}(t)\big[\bPhi(t-1)-\rho\bL\otimes\mathbf{I}_p\big]\be(t-1)\nonumber \\
+ &\bPhi^{-\frac{1}{2}}(t)\bH(t)\bC(t)\beps(t).
\end{align}
From \eqref{eq:matrixform-2}, we have ($\otimes$ denotes Kronecker product)
\begin{align*}
&E[\be^T(t)\bPhi(t)\be(t)] \nonumber\\
 =&  E[\be^T(t-1)(\bPhi(t-1)-\rho\bL\otimes\mathbf{I}_p)^T\bPhi^{-1}(t)\nonumber\\
& \hspace{1em} \times (\bPhi(t-1)-\rho\bL\otimes\mathbf{I}_p)\be(t-1)]\nonumber\\
+& 2E[\be^T(t-1)(\bPhi(t-1)-\rho\bL\otimes\mathbf{I}_p)^T\bPhi^{-1}(t) \bH(t)\bC(t)\beps(t)]\nonumber\\
+& E[\beps^T(t)\bC^T(t)\bH^T(t)\bPhi^{-1}(t)\bH(t)\bC(t)\beps(t)].
\end{align*}
Since $\bC(t)$ and $\beps(t)$ are irrelevant under (as1), the
second term on the right hand side is zero; hence,
\begin{align}\label{eq:conv3}
 &E[\be^T(t)\bPhi(t)\be(t)] \nonumber\\
= & E[\be^T(t-1)(\bPhi(t-1)-\rho\bL\otimes\mathbf{I}_p)^T\bPhi^{-1}(t)\nonumber\\
 \times & (\bPhi(t-1)-\rho\bL\otimes\mathbf{I}_p)\be(t-1)]\nonumber\\
 + &
 E[\beps^T(t)\bC^T(t)\bH^T(t)\bPhi^{-1}(t)\bH(t)\bC(t)\beps(t)].
\end{align}

\textbf{Stage 3.} Consider the first term on the right hand side of
\eqref{eq:conv3}. Since $\bL$ is positive
semi-definite, we can find a matrix
$\mathbf{U}=(\bL\otimes\mathbf{I}_p)^{\frac{1}{2}}$ such that
$\bL\otimes\mathbf{I}_p = \mathbf{U}^T\mathbf{U}$. By the matrix
inversion lemma, it holds that
\begin{align}\label{eq:ineq_inv_phi}
&(\bPhi(t-1)-\rho\bL\otimes\mathbf{I}_p)^{-1}\nonumber\\
= & (\bPhi(t-1)-\rho\mathbf{U}^T\mathbf{U})^{-1}\nonumber\\
= & \bPhi^{-1}(t-1)+\rho\bPhi^{-1}(t-1)\mathbf{U}^T \nonumber\\
\times &
(\mathbf{I}_{Jp}-\rho\mathbf{U}\bPhi^{-1}(t-1)\mathbf{U}^T)^{-1}\mathbf{U}\bPhi^{-1}(t-1).
\end{align}
For $\lambda = 1$, it follows from \eqref{eq:d-rls-phi} that
$\bPhi^{-1}(t-1)- \bPhi^{-1}(t) \succeq \mathbf{0}_{Jp}$. Since
$\bPhi^{-1}(0) = \gamma \mathbf{I}_{Jp}$, it holds that
$\bPhi^{-1}(t-1) \preceq \gamma \mathbf{I}_{Jp}$ for all $t \geq
1$, and consequently
\begin{align*}
\hspace{-0.5em}
\mathbf{I}_{Jp}-\rho\mathbf{U}\bPhi^{-1}(t-1)\mathbf{U}^T
         \succeq
         \mathbf{I}_{Jp}-\rho\gamma\mathbf{U}\mathbf{U}^T =
         \mathbf{I}_{Jp}-\rho\gamma\bL\otimes\mathbf{I}_p.
\end{align*}
If $0<\rho<1/(\gamma\lambda_{\max}(\bL))$, then for all $t
\geq 1$ it follows that
\begin{align*}
         & \mathbf{I}_{Jp}-\rho\mathbf{U}\bPhi^{-1}(t-1)\mathbf{U}^T
         \succeq \mathbf{0}_{Jp}.
\end{align*}
This implies that the second term of \eqref{eq:ineq_inv_phi} is
positive definite. Thus, we have
\begin{align}\label{eq:one-term-new-001}
\bPhi^{-1}(t) \preceq \bPhi^{-1}(t-1) \preceq
(\bPhi(t-1)-\rho\bL\otimes\mathbf{I}_p)^{-1}
\end{align}
and hence, the first term on the right hand side of
\eqref{eq:conv3} is bounded by
\begin{align}\label{eq:one-term-new}
& E[\be^T(t-1)(\bPhi(t-1)-\rho\bL\otimes\mathbf{I}_p)^T\bPhi^{-1}(t)\nonumber\\
& \hspace{1em} \times (\bPhi(t-1)-\rho\bL\otimes\mathbf{I}_p)\be(t-1)]\nonumber\\
\leq & E[\be^T(t-1)(\bPhi(t-1)-\rho\bL\otimes\mathbf{I}_p)^T \be(t-1)]\nonumber\\
\leq & E[\be^T(t-1) \bPhi(t-1)^T \be(t-1)].
\end{align}

\textbf{Stage 4.} Now consider the second term on the right hand
side of \eqref{eq:conv3}. Manipulating the expectation yields
\begin{align*}
&E[\beps^T(t)\bC^T(t)\bH^T(t)\bPhi^{-1}(t)\bH(t)\bC(t)\beps(t)]\\
=&E[\mathrm{tr}(\beps^T(t)\bC^T(t)\bH^T(t)\bPhi^{-1}(t)\bH(t)\bC(t)\beps(t))]\\
=&E[\mathrm{tr}(\bC^T(t)\bH^T(t)\bPhi^{-1}(t)\bH(t)\bC(t)\beps(t)\beps^T(t))]\\
=&E[\mathrm{tr}(\bC^T(t)\bH^T(t)\bPhi^{-1}(t)\bH(t)\bC(t)\mathrm{diag}(\{\sigma_j^2\})].
\end{align*}
where $\mathrm{diag}(\{\sigma_j^2\}) \in \mathbb{R}^{J \times J}$
is a diagonal matrix constructed with $\{\sigma_j^2\}_{j=1}^J$ on
its diagonal. Expanding the matrix multiplications and noting that
$c_j(t) \leq 1$, we obtain
\begin{align}
     & E[\beps^T(t)\bC^T(t)\bH^T(t)\bPhi^{-1}(t)\bH(t)\bC(t)\beps(t)] \nonumber \\
\leq & \sum_{j=1}^{J}\sigma_j^2 E[\h_j^T(t)\bPhi_j^{-1}(t)\h_j(t)]
\nonumber.
\end{align}
Because $\bPhi_j^{-1}(t - 1) \succeq \bPhi_j^{-1}(t)$ due to
\eqref{eq:phi-decreasing}, we further have
\begin{align} \label{eq:two-term-new-001}
     & E[\beps^T(t)\bC^T(t)\bH^T(t)\bPhi^{-1}(t)\bH(t)\bC(t)\beps(t)] \nonumber \\
\leq & \sum_{j=1}^{J}\sigma_j^2 E[\h_j^T(t)\bPhi_j^{-1}(t-1)\h_j(t)] \nonumber \\
\leq & \sum_{j=1}^{J}\sigma_j^2
E[\lambda_{\max}(\bPhi_j^{-1}(t-1)) \|\h_j(t)\|^2].
\end{align}
Since $\bPhi_j^{-1}(t-1)$ and $\h_j(t)$ are independent, it holds
$\forall t > t_0$ that
\begin{align} \label{eq:two-term-new-002}
     & E[\lambda_{\max}(\bPhi_j^{-1}(t-1)) \|\h_j(t)\|^2] \nonumber \\
   = & E[\lambda_{\max}(\bPhi_j^{-1}(t-1))] E[\|\h_j(t)\|^2] \nonumber \\
   < & \frac{\lambda_{\max}(\Rhj^{-1})}{2Q(\tau)(t-1)}
\mathrm{tr}(\Rhj).
\end{align}
The inequality is due to \eqref{eq:ineq_phi-002} that shows
$E[\lambda_{\max}( \bPhi^{-1}_j(t)) ] < \lambda_{\max}(\Rhj^{-1})
/ (2Q(\tau) t)$, $\forall t \geq t_0$ and the fact
$E[\|\h_j(t)\|^2] = \mathrm{tr}(E[\h_j(t)\h_j^T(t)]) =
\mathrm{tr}(\Rhj)$. Using \eqref{eq:two-term-new-002} allows one
to deduce from \eqref{eq:two-term-new-001} that
\begin{align} \label{eq:two-term-new}
     & E[\beps^T(t)\bC^T(t)\bH^T(t)\bPhi^{-1}(t)\bH(t)\bC(t)\beps(t)] \nonumber \\
\leq &
\frac{1}{2Q(\tau)(t-1)}\sum_{j=1}^{J}\sigma_j^2\lambda_{\max}(\Rhj^{-1})\mathrm{tr}(\Rhj)
\end{align}
holds $\forall t > t_0$.

For $t \leq t_0$, we have $\bPhi_j^{-1}(t) \preceq \bPhi_j^{-1}(0)
= \gamma \mathbf{I}_p$ because to \eqref{eq:one-term-new-001}, and
thus
\begin{align*}
 & \sum_{j=1}^{J}\sigma_j^2 E[\h_j^T(t)\bPhi_j^{-1}(t)\h_j(t)] \\
\leq & \sum_{j=1}^{J}\gamma \sigma_j^2 E[\h_j^T(t)\h_j(t)]=
\gamma\sum_{j=1}^{J}\sigma_j^2 tr(\Rhj).
\end{align*}
Therefore, for $t \leq t_0$ \eqref{eq:two-term-new-001} yields
\begin{align}\label{eq:two-term-new2}
     & E[\beps^T(t)\bC^T(t)\bH^T(t)\bPhi^{-1}(t)\bH(t)\bC(t)\beps(t)] \nonumber \\
\leq & \gamma\sum_{j=1}^{J}\sigma_j^2 \mathrm{tr}(\Rhj).
\end{align}

\textbf{Stage 5.} Substituting \eqref{eq:one-term-new},
\eqref{eq:two-term-new}  and \eqref{eq:two-term-new2} into
\eqref{eq:conv3} implies for $t> t_0$ that
\begin{align}\label{eq:ineq_main}
&E[\be^T(t)\bPhi(t)\be(t)]\nonumber\\
\leq &
E[\be^T(t-1)\bPhi(t-1)\be(t-1)]\nonumber\\
+ &
\frac{1}{2Q(\tau)(t-1)}\sum_{j=1}^{J}\sigma_j^2\lambda_{\max}(\Rhj^{-1})\mathrm{tr}(\Rhj)
\end{align}
while for $t \leq t_0$
\begin{align}\label{eq:ineq_main0}
&E[\be^T(t)\bPhi(t)\be(t)]\nonumber\\
\leq & E[\be^T(t-1)\bPhi(t-1)\be(t-1)] + \gamma
\sum_{j=1}^{J}\sigma_j^2 \mathrm{tr}(\Rhj).
\end{align}
Summing \eqref{eq:ineq_main} from $r=t_0+1$ to $r=t$ and
\eqref{eq:ineq_main0} from $r=1$ to $r=t_0$, applying telescopic
cancellation, and noticing that $\bPhi(0) = \gamma^{-1}
\mathbf{I}_{Jp}$, yields for  $t > t_0$
\begin{align} \label{eq:aaaa-0001}
&E[\be^T(t)\bPhi(t)\be(t)] \\
\leq & \gamma^{-1}||\be(0)||^2
+ (\gamma t_0 +\sum_{r=t_0+1}^{t}\frac{\lambda_{\max}(\Rhj^{-1})}{2Q(\tau)(t-1)})\sum_{j=1}^{J}\sigma_j^2\mathrm{tr}(\Rhj) \nonumber \\
\leq & \gamma^{-1}||\be(0)||^2+(\gamma
t_0+\frac{\lambda_{\max}(\Rhj^{-1})}{2Q(\tau)}\ln(t))\sum_{j=1}^{J}\sigma_j^2\mathrm{tr}(\Rhj).
\nonumber
\end{align}

On the other hand, it holds
\begin{align*}
E[\be^T(t)\bPhi(t)\be(t)] & \geq E[\|\be(t)\|^2/\lambda_{\max}(\bPhi^{-1}(t))] \\
& \geq E[\|\be(t)\|]^2/E[\lambda_{\max}(\bPhi^{-1}(t))]
\end{align*}
where the last line is due to Cauchy-Schwarz inequality
\begin{align*}
     & E[\|\be(t)\|^2/\lambda_{\max}(\bPhi^{-1}(t))] E[\lambda_{\max}(\bPhi^{-1}(t))] \\
=    & E[\left(\|\be(t)\|/\lambda_{\max}(\bPhi^{-1}(t))^{\frac{1}{2}}\right)^2] E[\left(\lambda_{\max}(\bPhi^{-1}(t))^{\frac{1}{2}} \right)^2] \\
\geq & E[\|\be(t)\|]^2.
\end{align*}
From \eqref{eq:ineq_phi-002}, $E[\lambda_{\max}( \bPhi^{-1}_j(t))
] < \lambda_{\max}(\Rhj^{-1}) / (2Q(\tau) t) =
1/(\lambda_{\min}(\Rhj) 2Q(\tau) t) $ holds asymptotically.
Definining $\mu := \min\{\lambda_{\min}(\Rhj), j \in
\mathcal{V}\}$, we establish that
\begin{align} \label{eq:aaaa-0002}
& 2Q(\tau) \mu t E[||\be(t)||^2] \leq  E[\be^T(t)\bPhi(t)\be(t)],
\quad t > t_0.
\end{align}

Combining \eqref{eq:aaaa-0001} and \eqref{eq:aaaa-0002} implies
\begin{align} \label{eq:ineq_main1}
& 2Q(\tau) \mu t E[||\be(t)||]^2 \\
\leq & \gamma^{-1}||\be(0)||^2+(\gamma t_0 +
\frac{\lambda_{\max}(\Rhj^{-1})}{2Q(\tau)}\ln(t))\sum_{j=1}^{J}\sigma_j^2\mathrm{tr}(\Rhj).
\nonumber
\end{align}
Finally, with $||\be(t)||^2:= \sum_{j=1}^J ||\be_j(t)||^2
 = \sum_{j=1}^J ||\s_j(t)-\s_0||^2$ this
leads to \eqref{eq:main}, which completes the proof of CD-RLS-1.

Consider next CD-RLS-2. Stage 1 of the proof remains the same, while
for Stage 2,
$\be_j(t-1)-\be_{j'}(t-1)$ is replaced by $c_j(t) \big[
\be_j(t-1)-\be_{j'}(t-1) \big]$ in \eqref{eq:conv2} to arrive at
\begin{align}\label{eq:conv2-2}
& \bPhi_j(t)\be_j(t) \\
 = & \bPhi_j(t-1)\be_j(t-1) -c_j(t) \h_j(t)\h_j^T(t)\be_j(t-1)\nonumber\\
+ & c_j(t)\h_j(t)\epsilon_j(t) -\rho\sum_{j'\in\N_j} c_j(t) \big[
\be_j(t-1)-\be_{j'}(t-1) \big]. \nonumber
\end{align}
Its matrix form \eqref{eq:conv3} can be expressed as
\begin{align}\label{eq:conv3-2}
 &E[\be^T(t)\bPhi(t)\be(t)] \nonumber\\
=& E[\be^T(t-1)(\bPhi(t-1)-\rho(\bC(t)\bL)\otimes\mathbf{I}_p)^T\bPhi^{-1}(t)\nonumber\\
& \hspace{1em} \times (\bPhi(t-1)-\rho(\bC(t)\bL)\otimes\mathbf{I}_p)\be(t-1)]\nonumber\\
+&
 E[\beps^T(t)\bC^T(t)\bH^T(t)\bPhi^{-1}(t)\bH(t)\bC(t)\beps(t)].
\end{align}

Observe that the right hand sides of \eqref{eq:conv3} and
\eqref{eq:conv3-2} are only different in their first terms.
Similar to Stage 3
(cf. \eqref{eq:one-term-new}), we need to show that the first term
satisfies
\begin{align}\label{eq:goal2}
& E[\be^T(t-1)(\bPhi(t-1)-\rho(\bC(t)\bL)\otimes\mathbf{I}_p)^T\bPhi^{-1}(t)\nonumber\\
& \hspace{1em} \times (\bPhi(t-1)-\rho(\bC(t)\bL)\otimes\mathbf{I}_p)\be(t-1)]\nonumber\\
\leq & E[\be^T(t-1)\bPhi(t-1)\be(t-1)].
\end{align}
Substituting the update
\eqref{eq:phi-decreasing} with $\lambda = 1$ into
\eqref{eq:goal2}, it suffices to prove that
\begin{align}\label{eq:ineq2}
& E[\be^T(t-1)\bC(t)\otimes \mathbf{I}_p\bH(t)\bH^T(t) \\
& \hspace{1em} \times (\mathbf{I}_J+\bH^T(t)\bPhi^{-1}(t-1)\bH(t))^{-1}\otimes\mathbf{I}_p\be(t-1)]\nonumber\\
\geq & \rho E[\be^T(t-1) \mathbf{W} \be(t-1)] \nonumber
\end{align}
where
\begin{align*}
\mathbf{W}   := & \mathbf{W}_1+\mathbf{W}_1^T-\mathbf{W}_2-(\bL\bC(t))\otimes\mathbf{I}_p-(\bC(t)\bL)\otimes\mathbf{I}_p\nonumber\\
+ & \rho\bL\otimes\mathbf{I}_p\bPhi^{-1}(t-1)(\bC(t)\bL)\otimes\mathbf{I}_p \\
\mathbf{W}_1 := & \bC(t)\otimes\mathbf{I}_p\bH(t)\bH^T(t)\bPhi^{-1}(t-1)\\
\times & ((\mathbf{I}_J+\bH^T(t)\bPhi^{-1}(t-1)\bH(t))^{-1}\bL)\otimes\mathbf{I}_p\\
\mathbf{W}_2 := & (\bL\bC(t))\otimes\mathbf{I}_p\bPhi^{-1}(t-1)\bH(t)\bH^T(t)\bPhi^{-1}(t-1)\\
\times &
((\mathbf{I}_J+\bH^T(t)\bPhi^{-1}(t-1)\bH(t))^{-1}\bL)\otimes\mathbf{I}_p.&
\end{align*}

For the left hand side of \eqref{eq:ineq2}, use the lower bound of
the conditional expectation $2Q(\tau)\leq
E[c_j(t)|\h_j(t),\s_j(t-1)]$ to eliminate $\bC(t)$, and arrive at
\begin{align} \label{eq:one-term-alg-2-001}
& E[\be^T(t-1)\bC(t)\otimes \mathbf{I}_p\bH(t)\bH^T(t)\\
& \hspace{1em} \times (\mathbf{I}_J+\bH^T(t)\bPhi^{-1}(t-1)\bH(t))^{-1}\otimes\mathbf{I}_p\be(t-1)] \nonumber \\
\geq & 2Q(\tau) E[\be^T(t-1) \bH(t)\bH^T(t) \nonumber \\
& \hspace{1em} \times
(\mathbf{I}_J+\bH^T(t)\bPhi^{-1}(t-1)\bH(t))^{-1}\otimes\mathbf{I}_p\be(t-1)].
\nonumber
\end{align}
By \eqref{eq:one-term-new-001}, it holds that $\bPhi^{-1}(t-1) \preceq
\bPhi^{-1}(0) = \gamma \mathbf{I}_{Jp}$, and thus
\begin{align*}
& \left[\mathbf{I}_J+\bH^T(t)\bPhi^{-1}(t-1)\bH(t)\right]^{-1} \succeq
\left[\mathbf{I}_J+ \gamma  \bH^T(t) \bH(t)\right]^{-1}.
\end{align*}
By assumption $\{ {\bf h}_j(t)\}$ are uniformly bounded. If
${\bf h}_j^T(t) {\bf h}_j(t) \leq K$ for all $j = 1, \ldots, J$, we find
\begin{align} \label{eq:one-term-alg-2-002}
& \left[\mathbf{I}_J+\bH^T(t)\bPhi^{-1}(t-1)\bH(t)\right]^{-1} \succeq
\frac{1}{1+\gamma K^2}  \mathbf{I}_J.
\end{align}
Substituting \eqref{eq:one-term-alg-2-002} into
\eqref{eq:one-term-alg-2-001}, we obtain a lower bound for the
left hand side of \eqref{eq:ineq2} given by
\begin{align} \label{eq:one-term-alg-2-002-1}
& E[\be^T(t-1)\bC(t)\otimes \mathbf{I}_p\bH(t)\bH^T(t)\\
& \hspace{1em} \times (\mathbf{I}_J+\bH^T(t)\bPhi^{-1}(t-1)\bH(t))^{-1}\otimes\mathbf{I}_p\be(t-1)] \nonumber \\
\geq & \frac{2Q(\tau)}{1+\gamma K^2}E[\be^T(t-1)\bH(t)\bH^T(t)\be(t-1)] \nonumber \\
= & \frac{2Q(\tau)}{1+\gamma K^2}E[\be^T(t-1)\text{diag}\{\Rhj\}\be(t-1)] \nonumber \\
\geq & \frac{2Q(\tau)\mu}{1+\gamma K^2}E[||\be(t-1)||^2].
\nonumber
\end{align}

As for the right hand side of \eqref{eq:ineq2}, it is upper bounded
by
\begin{align} \label{eq:one-term-alg-2-003}
& \rho E[\be^T(t-1) \mathbf{W} \be(t-1)] \\
\leq  & \rho E[(2||\mathbf{W}_1||_2+||\mathbf{W}_2||_2+2||\bL||_2 \nonumber \\
& \hspace{1em} +
\rho||\bL||_2^2||\bPhi^{-1}(t-1)||_2)||\be(t-1)||^2] \nonumber
\end{align}
where we used that all the diagonal elements $c_j(t)$ of $\bC(t)$ are within the range $[0, 1]$ while $||\mathbf{W}_1||_2$ is upper bounded by
\begin{align*}
||\mathbf{W}_1||_2 \leq &  ||\bC(t)||_2  ||\bH(t)||_2^2 ||\bPhi^{-1}(t-1)||_2 \\
\times & ||(\mathbf{I}_J+\bH^T(t)\bPhi^{-1}(t-1)\bH(t))^{-1}||_2
||\bL||_2.
\end{align*}
Noticing that $||\bC(t)||_2 \leq 1$, $||\bH(t)||_2^2 \leq K^2$ by
assumption, $||\bPhi^{-1}(t-1)||_2 \leq ||\bPhi^{-1}(0)||_2 =
\gamma$, $||(\mathbf{I}_J+\bH^T(t)\bPhi^{-1}(t-1)\bH(t))^{-1}||_2
\leq 1$ and $||\bL||_2 \leq \lambda_{\max}(\bL)$, we find that
\begin{align*}
& ||\mathbf{W}_1||_2 \leq \gamma\lambda_{\max}(\bL)K^2.
\end{align*}
Similarly, $||\mathbf{W}_2||_2$ is upper bounded by
\begin{align*}
& ||\mathbf{W}_2||_2 \leq \gamma^2\lambda_{\max}(\bL)^2K^2.
\end{align*}
Therefore, \eqref{eq:one-term-alg-2-003} reduces to
\begin{align} \label{eq:one-term-alg-2-004}
& \rho E[\be^T(t-1) \mathbf{W} \be(t-1)] \\
\leq & \rho(2\gamma\lambda_{\max}(\bL)K^2+\gamma^2\lambda_{\max}(\bL)^2K^2+2\lambda_{\max}(\bL) \nonumber \\
+ & \rho \gamma \lambda_{\max}(\bL)^2)E[||\be(t-1)||^2]. \nonumber
\end{align}

Considering a positive constant
\begin{align*}
\rho_0 := & \sqrt{\frac{2Q(\tau)\mu}{\gamma\lambda_{\max}(\bL)^2(1+\gamma K^2)}+ (\frac{\gamma K^2}{2}+\frac{\gamma K^2 + 1}{\gamma \lambda_{\max}(\bL)})^2}\\
- & (\frac{\gamma K^2}{2}+\frac{\gamma K^2 + 1}{\gamma
\lambda_{\max}(\bL)})
\end{align*}
and combining \eqref{eq:one-term-alg-2-002-1} with
\eqref{eq:one-term-alg-2-004}, we see that if $\rho$ is chosen
within $[0, \rho_0]$, then \eqref{eq:ineq2} holds for all $t \geq
1$; and so does \eqref{eq:goal2}.

Following Stages 4 and 5 in the proof for CD-RLS-1, we can show
that \eqref{eq:main} holds almost surely for CD-RLS-2 $\forall t
> t_0$. This completes the proof of the entire theorem.
\end{proof}

\section{Proof of Theorem \ref{thm:convergence3}}
\label{sec:proof-2}
%
%
Theorem \ref{thm:convergence3} relies on the following lemma.
\begin{lemma}\label{lem:cd-rls-3}
There exist constants $M>0$ and $t_0 > 0$ such that
\begin{align}\label{eq:lem}
\hspace{-1em} E[||\be_j(t)-\be_j(t-1)||] \leq \frac{M}{t}, ~
\forall j = 1,\cdots,J, \ t\geq t_0.
\end{align}
\end{lemma}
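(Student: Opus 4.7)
The goal is to bound $E[\|\s_j(t) - \s_j(t-1)\|]$ by $M/t$, which coincides with $E[\|\be_j(t) - \be_j(t-1)\|]$ because $\s_0$ is constant. I proceed by case analysis on the censoring indicator. If $c_j(t)=0$, Algorithm 3 keeps node $j$ idle with respect to its own estimate, so $\s_j(t)=\s_j(t-1)$ and the bound holds trivially; the forced-reception step triggered by the $d_{\max}$ rule only refreshes stored neighbor estimates and does not alter $\s_j$ or $\bPhi_j^{-1}$.

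If $c_j(t)=1$, combining update \eqref{eq:new-d-rls-s} with $x_j(t) = \h_j^T(t)\s_0 + \epsilon_j(t)$ gives $\s_j(t)-\s_j(t-1) = \bPhi_j^{-1}(t)\h_j(t)[\h_j^T(t)(\s_0-\s_j(t-1))+\epsilon_j(t)] - \rho\bPhi_j^{-1}(t)\bdelta_j(t-1)$. Applying the triangle inequality and submultiplicativity, $\|\s_j(t)-\s_j(t-1)\| \leq \lambda_{\max}(\bPhi_j^{-1}(t))\,[\,\|\h_j(t)\|^2\,\|\s_j(t-1)-\s_0\| + \|\h_j(t)\|\,|\epsilon_j(t)| + \rho\,\|\bdelta_j(t-1)\|\,]$. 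Under (as1), $\|\h_j(t)\|\leq K$ is deterministically bounded and $E[|\epsilon_j(t)|]<\infty$; under (as2), $\|\s_j(t-1)-\s_0\|$ is uniformly bounded, and $\|\bdelta_j(t-1)\|$ is also uniformly bounded since it is a sum over $\N_j$ of differences between $\s_j$ and the most recently received neighbor estimates, each of which is bounded. To handle the $\lambda_{\max}(\bPhi_j^{-1}(t))$ factor, observe that \eqref{eq:phi-decreasing} with $\lambda=1$ implies $\bPhi_j^{-1}(t)\preceq\bPhi_j^{-1}(t-1)$, so $\lambda_{\max}(\bPhi_j^{-1}(t))\leq\lambda_{\max}(\bPhi_j^{-1}(t-1))$. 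This upper bound is measurable with respect to the history $\sigma$-algebra through slot $t-1$, hence independent of $\h_j(t)$ and $\epsilon_j(t)$ under (as1). Factoring the expectation and invoking the bound $E[\lambda_{\max}(\bPhi_j^{-1}(t-1))] < \lambda_{\max}(\Rhj^{-1})/(2Q(\tau)(t-1))$ from Stage 1 of the proof of Theorem \ref{thm:convergence} produces $E[\|\s_j(t)-\s_j(t-1)\|] \leq M/t$ for all $t\geq t_0+1$, with $M$ chosen uniformly in $j$.

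\textbf{Main obstacle.} The principal subtlety is verifying that $\|\bdelta_j(t-1)\|$ remains uniformly bounded despite the skipped updates and outdated neighbor information characteristic of CD-RLS-3. In CD-RLS-1 under $\lambda=1$ the clean telescoping $\bdelta_j(t) = \sum_{j'\in\N_j}[\s_j(t)-\s_{j'}(t)]$ makes boundedness an immediate consequence of (as2). In CD-RLS-3, by contrast, $\bdelta_j$ is only updated on non-censored slots and uses stored (possibly stale) neighbor values, so this clean telescoping does not apply verbatim. Nevertheless, an analogous telescoping with stored estimates replacing current ones, together with the $d_{\max}$ reception rule bounding how stale the data can become, should still yield a uniform bound; this bookkeeping is the delicate piece. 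A secondary subtlety is the decoupling argument between $\lambda_{\max}(\bPhi_j^{-1}(t))$ and the slot-$t$ innovation, which hinges on the monotonicity of $\bPhi_j^{-1}(t)$ — a property that holds only because $\lambda=1$ and $c_j(t)\h_j(t)\h_j^T(t)\succeq \mathbf{0}$.
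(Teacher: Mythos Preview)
Your proposal is correct and follows essentially the same approach as the paper: bound the increment by $\|\bPhi_j^{-1}(t)\|_2$ times a quantity that is deterministically bounded (via (as1) and (as2)) plus a term involving $|\epsilon_j(t)|$, then invoke $E[\|\bPhi_j^{-1}(t)\|_2]=O(1/t)$ from Stage~1 of Theorem~\ref{thm:convergence}. Regarding your ``main obstacle,'' the paper resolves it exactly as you conjecture---it writes the Laplacian term directly as $\sum_{j'\in\N_j}\big(\be_j(t-1)-\be_{j'}(t-d_j^{j'}(t))\big)$ with $d_j^{j'}(t)\leq d_{\max}$, so that each summand is bounded under (as2); your decoupling via $\bPhi_j^{-1}(t)\preceq\bPhi_j^{-1}(t-1)$ is in fact more careful than the paper's treatment of that expectation.
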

\begin{proof}[Proof of Lemma \ref{lem:cd-rls-3}]
The update of $\be_j(t)$ for CD-RLS-3 is (cf. \eqref{eq:conv1} for
CD-RLS-1)
\begin{align} \label{eq:lemma-1-001}
\be_j(t) &=\be_j(t-1)-c_j(t)\bPhi_j^{-1}(t)\h_j(t)\h_j^T(t)\be_j(t-1)\\
& +c_j(t)\bPhi_j^{-1}(t)\h_j(t)\epsilon_j(t) \nonumber \\
&
-c_j(t)\rho\bPhi_j^{-1}(t)\sum_{j'\in\N_j}(\be_j(t-1)-\be_{j'}(t-d_j^{j'}(t))).
\nonumber
\end{align}
Per time $t$, $t-d_j^{j'}(t)$ is the latest time slot when node
$j$ received information from its neighbor $j'$. Therefore,
$d_j^{j'}(t)$ can be viewed as network delay caused by the
censoring strategy. Then we have
\begin{align*}
&||\be_j(t)-\be_j(t-1)|| \\
= &||c_j(t)\bPhi_j^{-1}(t)\big[\h_j(t)\h_j^T(t)\be_j(t-1) -\h_j(t)\epsilon_j(t)\\
& \hspace{0.5em} + \rho\sum_{j'\in\N_j}(\be_j(t-1)-\be_{j'}(t-d_j^{j'}(t)))\big]||\\
\leq & ||\bPhi_j^{-1}(t)||_2 \big[||\h_j(t)||^2||\be_j(t-1)||+||\h_j(t)|||\epsilon_j(t)|\\
+ &
\rho\sum_{j'\in\N_j}(||\be_j(t-1)||+||\be_{j'}(t-d_j^{j'}(t))||)\big].
\end{align*}
In deriving the inequality we use the fact that $c_j(t) \in \{0,
1\}$.

According to \eqref{eq:ineq_phi-002} in the proof of Theorem
\ref{thm:convergence}, which also holds true for CD-RLS-3, there
exists $t_0 > 0$, such that $E[||\bPhi_j^{-1}(t)||_2]$ is upper bounded by $M_1/t$ when $t > t_0$, where $M_1$ is
a positive constant determined by $Q(\tau)$ and the smallest
eigenvalue of $\Rhj(t)$. By (as1) and (as2), $||\h_j(t)||$,
$||\be_j(t-1)||$ and $||\be_{j'}(t-d_j^{j'}(t))||$ are also upper
bounded. Therefore, there exist constants $M_2, M_3 > 0$,
such that
\begin{align*}
&||\be_j(t)-\be_j(t-1)||\leq||\bPhi_j^{-1}(t)||_2[M_2
+M_3|\epsilon_j(t)|].
\end{align*}
Taking expectations on both sides yields \eqref{eq:lem}.
\end{proof}

Now we turn to prove Theorem \ref{thm:convergence3}.

\begin{proof}[Proof of Theorem \ref{thm:convergence3}] Rewrite the update of $\be_j(t)$ for CD-RLS-3 in
\eqref{eq:lemma-1-001} to
\begin{align*}
\be_j(t) &=\be_j(t-1)-c_j(t)\bPhi_j^{-1}(t)\h_j(t)\h_j^T(t)\be_j(t-1)\\
&-c_j(t)\rho\bPhi_j^{-1}(t)\sum_{j'\in\N_j}(\be_j(t-1)-\be_{j'}(t-1))\\
&-c_j(t)\rho\bPhi_j^{-1}(t)\sum_{j'\in\N_j}(\be_{j'}(t-1)-\be_{j'}(t-d_j^{j'}(t)))\\
&+c_j(t)\bPhi_j^{-1}(t)\h_j(t)\epsilon_j(t).
\end{align*}
Multiplying $\bPhi_j(t)$ on both sides, we have
\begin{align*}
\bPhi_j(t)\be_j(t) &= \bPhi_j(t-1)\be_j(t-1)\\
&-c_j(t)\rho\sum_{j'\in\N_j}(\be_j(t-1)-\be_{j'}(t-1))\\
&-c_j(t)\rho\sum_{j'\in\N_j}(\be_{j'}(t-1)-\be_{j'}(t-d_j^{j'}(t)))\\
&+c_j(t)\h_j(t)\epsilon_j(t).
\end{align*}
Using the same notations as in the proof of Theorem
\ref{thm:convergence}, we obtain an matrix form
\begin{align}\label{eq:update3}
\bPhi(t)\be(t) &= (\bPhi(t-1)-\rho(\bC(t)\bL)\otimes\mathbf{I}_p)\be(t-1)\\
 & + \bH(t)\bC(t)\beps(t)
 -\rho(\bC(t)\otimes\mathbf{I}_p)\tilde{\be}(t). \nonumber
\end{align}
where $\tilde{\be}(t) \in \mathbb{R}^{Jp}$ and its $j$th block is
$\sum_{j'\in\N_j}(\be_{j'}(t-1)-\be_{j'}(t-d_j^{j'}(t)))$. Observe
that $\tilde{\be}(t)$ contains the differences between the local
estimates and their delayed values, and hence plays a critical
role in the convergence proof. Below we look for an upper bound
for $E[||\tilde{\be}(t)||]$.

By the Cauchy-Schwarz inequality, we have
\begin{align*}
& ||\sum_{j'\in\N_j}(\be_{j'}(t-1)-\be_{j'}(t-d_j^{j'}(t)))||\\
\leq& \sqrt{|\N_j|}\sqrt{\sum_{j'\in\N_j}||\be_{j'}(t-1)-\be_{j'}(t-d_j^{j'}(t))||^2}\\
= & \sqrt{|\N_j|}\sqrt{\sum_{j'\in\N_j}||\sum_{k=1}^{d_j^{j'}(t)-1}(\be_{j'}(t-k)-\be_{j'}(t-k-1))||^2}\\
\leq &
\sqrt{|\N_j|(d_{\max}-1)}\sqrt{\sum_{j'\in\N_j}\sum_{k=1}^{d_{\max}-1}||\be_{j'}(t-k)-\be_{j'}(t-k-1)||^2}\\
\leq &
\sqrt{|\N_j|(d_{\max}-1)}\sum_{j'\in\N_j}\sum_{k=1}^{d_{\max}-1}||\be_{j'}(t-k)-\be_{j'}(t-k-1)||.
\end{align*}
Here we use the fact that $d_j^{j'}(t)$ is no larger than the
maximal delay $d_{\max}$. Take expectation and use Lemma
\ref{lem:cd-rls-3}. There exists $t_0 > 0$ such that when $t\geq
t_0$ it holds
\begin{align*}
& E[||\sum_{j'\in\N_j}(\be_{j'}(t-1)-\be_{j'}(t-d_j^{j'}(t)))||]\\
\leq & \sqrt{|\N_j|(d_{\max}-1)}\sum_{j'\in\N_j}\sum_{k=1}^{d_{\max}-1}\frac{M}{t-k}\\
\leq & (|\N_j|(d_{\max}-1))^{\frac{3}{2}}\frac{M}{t-d_{\max}}.
\end{align*}
Therefore, $\forall t\geq t_0$
\begin{align*}
E[||\tilde{\be}(t)||] &= E[\sqrt{\sum_{j=1}^{J}||\sum_{j'\in\N_j}(\be_{j'}(t-1)-\be_{j'}(t-d_j^{j'}(t)))||^2}]\\
&\leq E[\sum_{j=1}^{J}||\sum_{j'\in\N_j}(\be_{j'}(t-1)-\be_{j'}(t-d_j^{j'}(t)))||]\\
&\leq (\sum_{j=1}^{J}|\N_j|^{\frac{3}{2}})\frac{(d_{\max}-1)^{\frac{3}{2}}M}{t-d_{\max}}\\
&\leq \frac{M_0}{t}
\end{align*}
for some constant $M_0>0$.

Back to \eqref{eq:update3}, multiplying $\bPhi^{-\frac{1}{2}}(t)$
on both sides yields
\begin{align*}
& \bPhi^{\frac{1}{2}}(t)\be(t) = \bPhi^{-\frac{1}{2}}(t)(\bPhi(t-1)-\rho(\bC(t)\bL)\otimes\mathbf{I}_p)\be(t-1)\nonumber\\
& \hspace{2.4em} +\bPhi^{-\frac{1}{2}}(t)\bH(t)\bC(t)\beps(t)
-\rho\bPhi^{-\frac{1}{2}}(t)(\bC(t)\otimes\mathbf{I}_p)\tilde{\be}(t).
\end{align*}
Since $\bH(t)$ and $\beps(t)$ are independent as given by (as1),
we have
\begin{align}\label{eq:ineq_main3-temp}
&E[\be^T(t)\bPhi(t)\be(t)] \\
=& E[\be^T(t-1)(\bPhi(t-1)-\rho(\bC(t)\bL)\otimes\mathbf{I}_p)^T\bPhi^{-1}(t)\nonumber\\
& \hspace{1em} \times (\bPhi(t-1)-\rho(\bC(t)\bL)\otimes\mathbf{I}_p)\be(t-1)]\nonumber\\
+& E[\beps^T(t)\bC^T(t)\bH^T(t)\bPhi^{-1}(t)\bH(t)\bC(t)\beps(t)] \nonumber \\
+& \rho^2E[\tilde{\be}^T(t)\bC(t)\otimes\mathbf{I}_p\bPhi^{-1}(t)\bC(t)\otimes\mathbf{I}_p\tilde{\be}(t)] \nonumber\\
+& \rho E[\tilde{\be}^T(t)\bC(t)\otimes\mathbf{I}_p\bPhi^{-1}(t)\nonumber\\
& \hspace{1em} \times
(\bPhi(t-1)-\rho(\bC(t)\bL)\otimes\mathbf{I}_p)\be(t-1)].
\nonumber
\end{align}
Observe that \eqref{eq:ineq_main3-temp} is different to
\eqref{eq:conv3-2} for having the last two terms at the right hand
side. Because all the diagonal elements $c_j(t)$ in the diagonal
matrix $\bC(t)$ are within $[0,1]$, $\forall t \geq t_0$
\begin{align*}
     & \rho^2E[\tilde{\be}^T(t)\bC(t)\otimes\mathbf{I}_p\bPhi^{-1}(t)\bC(t)\otimes\mathbf{I}_p\tilde{\be}(t)] \\
\leq & \rho^2 E[||\tilde{\be} (t)||^2 ||\bPhi^{-1}(t)||_2] \\
\leq & \frac{\rho^2  M_0^2}{t^2} E[||\bPhi^{-1}(t)||_2].
\end{align*}
The right hand side is in the order of $O(1/t^3)$ because
$E[||\bPhi^{-1}(t)||_2]$ is no larger than
$\lambda_{\max}(\Rhj^{-1})/(2Q(\tau)t)$ for all $t \geq t_0$ as we
have shown in Step 1 of the proof of Theorem \ref{thm:convergence}
(cf. \eqref{eq:ineq_phi-002}). Meanwhile, $\forall t \geq t_0$
\begin{align*}
& \rho E[\tilde{\be}^T(t)\bC(t)\otimes\mathbf{I}_p\bPhi^{-1}(t)\nonumber\\
& \hspace{1em} \times
(\bPhi(t-1)-\rho(\bC(t)\bL)\otimes\mathbf{I}_p)\be(t-1)] \nonumber
\\ \leq & \rho E[ \|\tilde{\be}(t)\| ||\bPhi^{-1}(t)||_2
(||\bPhi(t-1)||_2+\rho||\bL||_2) ||\be(t-1)||].
\end{align*}
Observe that $E[\|\tilde{\be}(t)\|]$ and $E[||\bPhi^{-1}(t)||_2]$
are in the orders of $O(1/t)$ and $O(1/t)$, respectively, while
$E[||\bPhi(t-1)||_2+\rho||\bL||_2]$ is in the order of $O(t)$
because $E[||\bPhi_j(t)||_2]\leq t\lambda_{\max}(\Rhj)$ (cf. \eqref{eq:ineq_phi-001}). In
addition, $||\be(t-1)||$ is bounded by (as2). Therefore, the right
hand side is in the order of $O(1/t)$.

For the first term at the right hand side of
\eqref{eq:ineq_main3-temp}, similar to the proof for CD-RLS-2, if
$\rho$ is chosen within $[0,\rho_0]$ we are able to show that (cf.
\eqref{eq:goal2})
\begin{align*}
& E[\be^T(t-1)(\bPhi(t-1)-\rho(\bC(t)\bL)\otimes\mathbf{I}_p)^T\bPhi^{-1}(t)\nonumber\\
 &\hspace{1em} \times (\bPhi(t-1)-\rho(\bC(t)\bL)\otimes\mathbf{I}_p)\be(t-1)]\nonumber\\
\leq & E[\be^T(t-1)\bPhi(t-1)\be(t-1)].
\end{align*}
Finally, following Step 4 of the proof of Theorem
\ref{thm:convergence} to handle the second term at the right hand
side of \eqref{eq:ineq_main3-temp}, we know that it is also in the
order of $O(1/t)$. Therefore, for all $t \geq t_0$
\eqref{eq:ineq_main3-temp} yields
\begin{align*}
&E[\be^T(t)\bPhi(t)\be(t)]\\
\leq & E[\be^T(t-1)\bPhi(t-1)\be(t-1)]
+\frac{K_1}{t}+\frac{K_2}{t^3}.
\end{align*}
where $K_1, K_2 > 0$ are constants. Summing up both sides from
time $r = t_0$ to $r = t$, we have
\begin{align} \label{eq:ineq_main3-temp-1}
&E[\be^T(t)\bPhi(t)\be(t)]\\
\leq & E[\be^T(t_0-1)\bPhi(t_0-1)\be(t_0-1)] +\sum_{r=t_0}^t
\frac{K_1}{t}+\sum_{r=t_0}^t\frac{K_2}{t^3}. \nonumber
\end{align}
Observing that $E[\be^T(t_0-1)\bPhi(t_0-1)\be(t_0-1)]$ is bounded
because $||\be(t_0-1)||$ is bounded by (as2), the right hand side
of \eqref{eq:ineq_main3-temp-1} is in the order of
$O(1)+O(\ln(t))$. Following the argument in Step 5 of the proof of
Theorem \ref{thm:convergence}, $E[||\bPhi^{-1}(t)||_2]$ is in the order of
$O(1/t)$ when $t \geq t_0$. Therefore, $E[||\be(t)||]^2$ is in the
order of $O(1/t)+O(\ln(t)/t)$, which completes the proof of
Theorem \ref{thm:convergence3}.
\end{proof}


\end{document}